\documentclass[preprint,12pt]{elsarticle}

\usepackage[T1]{fontenc}

\usepackage{algpseudocode}
\usepackage{algorithm}
\usepackage{amsmath}
\usepackage{amssymb}
\usepackage{amsthm}
\usepackage{amsbsy}

\usepackage{comment}
\usepackage{hyperref}

\newtheorem{definition}{Definition}[section]
\newtheorem{theorem}{Theorem}[section]
\newtheorem{lemma}{Lemma}[section]
\newtheorem{corollary}{Corollary}[section]
\newtheorem{proposition}{Proposition}[section]

\newtheorem{claim}{Claim}[section]

\usepackage{mathtools}
\usepackage[algo2e,ruled,vlined]{algorithm2e}
\usepackage{algcompatible}

\setlength {\marginparwidth }{2cm} 
\usepackage{todonotes}
\frenchspacing % No double spacing between sentences
\linespread{1.0} % Set linespace

\usepackage[a4paper, lmargin=0.1666\paperwidth, rmargin=0.1666\paperwidth, tmargin=0.1111\paperheight, bmargin=0.1111\paperheight]{geometry} %margins
\usepackage{parskip}

\setlength{\textheight} {9. in} \setlength{\textwidth} {6.3 in}
\voffset -1 in \hoffset -0.5 in \topmargin 0.8 in
\setlength{\evensidemargin} {0.6 in} \setlength{\oddsidemargin}{0.6
in} \setlength {\columnsep}{6 mm} \baselineskip 8 mm

\journal{Discrete Optimization}

\begin{document}

\begin{frontmatter}
\title{Disjoint Dominating    and $2$-Dominating Sets in  Graphs:  Hardness and Approximation results}
\author[a1]{Soumyashree Rana}
\ead{maz218122@iitd.ac.in}
\address[a1]{Department of Mathematics, Indian Institute of Technology Delhi, New Delhi, 110016, India}
\author[a2]{Sounaka Mishra}
\ead{sounak@iitm.ac.in}
\address[a2]{Department of Mathematics, Indian Institute of Technology Madras, Chennai, 600036, India} 

\author[a1]{Bhawani Sankar Panda}
\ead{bspanda@maths.iitd.ac.in}

\begin{abstract}
A set $D \subseteq V$ of a graph $G=(V, E)$ is a dominating set of $G$ if each vertex $v\in V\setminus D$ is adjacent to at least one vertex in $D,$ whereas a set $D_2\subseteq V$ is a $2$-dominating (double dominating) set of $G$ 
if each vertex $v\in V\setminus D_2$ is adjacent to at least two vertices in $D_2.$ A graph $G$ is a $DD_2$-graph if there exists a pair ($D, D_2$) of dominating set and $2$-dominating set of $G$ which are disjoint.   
In this paper, we  solve  some  open problems posed  by M.Miotk, J.~Topp and P.{\.Z}yli{\'n}ski (Disjoint dominating and 2-dominating sets in graphs,  Discrete Optimization, 35:100553, 2020) by giving  approximation algorithms  for the problem of determining a minimal spanning $DD_2$-graph of  minimum size (\textsc{Min-$DD_2$}) with an approximation ratio of $3$; a minimal spanning $DD_2$-graph of  maximum size (\textsc{Max-$DD_2$}) with an approximation ratio of $3$; and for the problem of  adding  minimum  number of edges  to a graph $G$  to make it a $DD_2$-graph  (\textsc{Min-to-$DD_2$}) with an $O(\log n)$ approximation ratio. Furthermore, we prove that \textsc{Min-$DD_2$} and \textsc{Max-$DD_2$} are \textsc{APX}-complete for graphs with maximum degree $4.$
We also  show that \textsc{Min-$DD_2$} and \textsc{Max-$DD_2$} are approximable within a factor of $1.8$ and $1.5$ respectively, for any $3$-regular graph. Finally, we show the inapproximability result of \textsc{Max-Min-to-$DD_2$} for bipartite graphs, that this problem can not be approximated within $n^{\frac{1}{6}-\varepsilon}$ for any $\varepsilon >0,$ unless \textsc{P=NP}.   
\end{abstract}
\begin{keyword}
Domination, double domination, \textsc{NP-}complete, Approximation algorithm, \textsc{APX-}complete.
\end{keyword}
\end{frontmatter}

\section{Introduction}
Let $G=(V, E)$ be a finite, simple, and undirected graph with vertex set $V$ and edge set $E.$ The graphs considered in this paper are  without isolated vertices.  A set $D \subseteq V$ is said to be a \textit{dominating set} of $G$ if each vertex in $V\setminus D$ has an adjacent vertex in $D.$ The minimum cardinality among all dominating sets of $G$ is the \textit{domination number} of $G,$ and it is denoted by $\gamma(G).$ Likewise, a set $D_2 \subseteq V$ is known as a \textit{$2$-dominating (double dominating) set} of $G$ if each vertex in $V\setminus D_2$ has at least two adjacent vertices in $D_2.$ The minimum cardinality among all $2$-dominating sets of $G$ is the \textit{$2$-domination number} of $G,$ and it is denoted by $\gamma_2(G).$ 

A graph having no isolated vertex contains two disjoint dominating sets, which was first observed by Ore \cite{ore1962theory}. This implies that the vertex set of a graph can be partitioned into two disjoint dominating sets, provided it has no isolated vertices.
%As a consequence, the vertex set of any graphs having no isolated vertices can be partitioned into two disjoint dominating sets. 
In the near past, researchers have studied computing the minimum size of a pair of disjoint dominating sets in a graph. It is known to be \textsc{NP}-complete \cite{henning2009remarks}. Some other results related to this graph parameter are available in  \cite{anusuya2012note,hedetniemi2006disjoint}. Conditions that guarantee the existence of a dominating set whose complement contains a $2$-dominating set, paired dominating set, or an independent dominating set are presented in \cite{henning2010independent,henning2013graphs,kiunisala2016pairs,southey2011characterization}. Henning and Rall \cite{henning2013graphs} initiated the study of graphs having a dominating set whose complement is a $2$-dominating set. 
In a graph $G$, a pair of disjoint sets $(X, Y)$ is called a $DD_2$-pair if $X$ is a dominating set and $Y$ is a 2-dominating set of $G$. A graph $G$ is called $DD_2$-graph if it has a $DD_2$-pair.
%They define a $DD_2$-pair in a graph $G$ to be a pair $(X,Y)$ of disjoint sets such that $X$ is a dominating set, and $Y$ is a $2$-dominating set of $G.$ A graph which contains such a $DD_2$-pair is a $DD_2$-graph. 
It is easy to verify that complete graphs $K_n$ with $n \geq 3$, cycles $C_n$ with $n \geq 3$, and paths $P_n$ with $n=3$ or $n \geq 5$ are $DD_2$-graphs. It is known that one can construct a non-$DD_2$ graph by adding a pendant edge to every vertex of a given graph $G$. It is important to mention here that if we add at least two pendant edges to each vertex of a graph $G$, then the resulting graph is always a $DD_2$-graph. It is also proved that any graph $G$ with a minimum degree of at least 2 is a $DD_2$-graph \cite{henning2010independent}.  

Miotk et al. \cite{miotk2020disjoint} further continued the study of conditions that ensure a partition of the vertex set of a graph into a dominating set and a $2$-dominating set from algorithmic insights. They have considered various optimization problems associated with $DD_2$-graphs. 
A graph $H=(V_H, E_H)$ is called a spanning subgraph of $G=(V, E)$ if $V_H=V$ and $E_H \subseteq E$. A connected graph $G$ is said to be a \textit{minimal $DD_2$-graph} if $G$ itself is a $DD_2$-graph and no proper spanning subgraph of $G$ is a $DD_2$-graph. We say a disconnected graph $G$ is a \textit{minimal $DD_2$-graph} if every connected component of $G$ is a minimal $DD_2$-graph.    
In \cite{miotk2020disjoint}, the authors considered the computational complexity of some of the following optimization problems related to $DD_2$-graph property. These problems, except \textsc{Max-Min-to-$DD_2$}, were  defined in \cite{miotk2020disjoint}. In this paper, we initiate the study of the \textsc{Max-Min-to-$DD_2$} problem. 
\begin{description}
  \item[1.] \textsc{Min-$DD_2$} : Given a $DD_2$-graph $G=(V, E),$ in \textsc{Min-$DD_2$}, it is required to find a subgraph $H=(V, E')$ of $G$ such that $H$ is a minimal spanning $DD_2$-graph with minimum $|E'|$.
  
  \item[2.] \textsc{Max-$DD_2$} : Given a $DD_2$-graph $G=(V, E),$ in \textsc{Max-$DD_2$}, it is required to find a subgraph $H=(V, E')$ of $G$ such that $H$ is a minimal spanning $DD_2$-graph with maximum $|E'|.$
  
  \item[3.] \textsc{Min-to-$DD_2$} : Given a non-$DD_2$-graph $G=(V, E),$ in \textsc{Min-to-$DD_2$}, it is required to find a minimum size edge set $E'$ such that $E \cap E' = \emptyset$ and $(V, E \cup E')$ is a minimal spanning $DD_2$-graph.
  
  \item[4.] \textsc{Max-Min-to-$DD_2$} : Given a non-$DD_2$-graph $G=(V, E),$ in \textsc{Max-Min-to-$DD_2$}, it is required to find a maximum size edge set $E'$ such that $E \cap E' = \emptyset$ and $(V, E \cup E')$ is a minimal spanning $DD_2$-graph.
\end{description}

In \cite{miotk2020disjoint}, it is proved that both \textsc{Min-$DD_2$} and \textsc{Max-$DD_2$} are \textsc{NP}-complete. They have also proved that \textsc{Min-to-$DD_2$} is \textsc{NP}-complete. In this paper, we extend the algorithmic study of these optimization problems. The main contributions of the paper are summarised below.
\begin{itemize}
\item[1.] We prove that \textsc{Min-$DD_2$} and \textsc{Max-$DD_2$} are approximable within a factor of $3$.
%We provide approximation algorithms for \textsc{Min-$DD_2$} and \textsc{Max-$DD_2$} with an approximation ratio of $3$ for general graphs.

\item[2.] For any $3$-regular graph, we show that the \textsc{Min-$DD_2$} and \textsc{Max-$DD_2$} are approximable within a factor of $1.8$ and $1.5$ respectively.

\item[3.] Moreover, we prove that the \textsc{Min-$DD_2$} and the \textsc{Max-$DD_2$} are \textsc{APX}-complete for graphs with maximum degree $4.$ 

\item [4.]We provide an $O(\log n)$ factor approximation algorithm for \textsc{Min-to-$DD_2$}.

\item[5.]  We show that the \textsc{Max-Min-to-$DD_2$} can not be approximated within  a factor of $n^{\frac{1}{6}-\varepsilon}$ for any $\varepsilon >0,$ unless \textsc{P=NP}, for bipartite graphs.
\end{itemize}

The above results $(1)$ and $(4)$ answers the open problems posed in \cite{miotk2020disjoint}.

\section{Preliminaries}
This section gives some pertinent definitions and states some preliminary results that will be used in this paper. 

Let $G=(V, E)$ be a finite, simple, and undirected graph with no isolated vertex. The open neighborhood of a vertex $v$ in $G$ is $N_G(v)=\{u \in V \mid uv \in E\}$ and the closed neighborhood is $N_G[v]= N_G(v) \cup \{v\}.$ The degree of a vertex $v$ in $G$ is $|N_G(v)|$ and is denoted by $d_G(v).$ If $d_G(v)=1$, then $v$ is called a \textit{pendent vertex} (leaf), and its unique neighbor $u$ in $G$ is called the \textit{support vertex}. We will also denote $L_G(v)$ as the set of pendant neighbors of $v$ in $G$. If a support vertex has at least two pendant vertices, then it is a strong support vertex, otherwise, it is a weak support vertex (only one pendant neighbor). The set of weak and strong support vertices of a graph $G$ are denoted by $S'_G$ and $S''_G,$ respectively. We will denote $L'_G$ and $L''_G$ as the set of pendant neighbors of weak support vertices and strong support vertices in $G$, respectively. Note that $S_G = S'_G \cup S''_G$ and $L_G = L'_G \cup L''_G$.
The minimum and maximum degree of $G$ is denoted by $\delta(G)$ and $\Delta(G),$ respectively. For $D\subseteq V,~G[D]$ denotes the subgraph induced by $D$ on $G.$ We use the notation $[k]$ for $\{1, 2, \cdots, k\}.$  We refer to \cite{west2001introduction} for other notations and graph terminologies that are not mentioned here.

A \textit{bipartite graph} is a graph $G=(V,E)$ whose vertices can be partitioned into two disjoint sets $X$ and $Y$ such that every edge has one endpoint in $X$ and other in $Y.$ Sometime, we denote a bipartite graph with bi-partition $X$ and $Y$ of $V$ as $G=(X\cup Y,E).$ 
%Let $G=(X,Y,E)$ be a bipartite graph. A bipartite graph $G=(V, E)$ is called a \textit{chordal bipartite graph} if every cycle in $G$ of length at least $6$ has a chord, that is joining two non-consecutive vertices of the cycle. 
An edge set $M\subseteq E$ in $G$ is called a $\it{matching}$ if the degree of each vertex in $(V, M)$ is at most 1, and the vertices $V_M\subseteq V$ of degree 1 in $(V, M)$ are called \textit{$M$-saturated} vertices in $G$. If $G[V_M] = (V_M, M)$ then $M$ is called an \textit{induced matching} in $G$. A vertex set $S$ is called a \textit{vertex cover} in $G$ if every edge $uv \in E$ has at least one end-vertex in $S$. 
%A vertex set $I$ is called an \textit{independent set} in $G$ if no two vertices in $I$ are adjacent. 
A multigraph $H$ is called a \textit{corona graph} if every vertex of $H$ is either a leaf or it is adjacent to a leaf of $H.$ The \textit{subdivision graph S(H)} of a multigraph $H$ is the graph obtained from $H$ by inserting a new vertex onto each edge of $H.$

The following known results will be used throughout the paper.  

\begin{lemma}\cite{miotk2020disjoint}
\label{minimal-DD-2size}
Any minimal $DD_2$-graph has at least $3$ vertices and the $3$-vertex path $(P_3)$ is the smallest minimal $DD_2$-graph. Thus, a minimal spanning $DD_2$-graph of a graph $G$ of order $n$ must be of size at least $\frac{2n}{3}.$  
\end{lemma}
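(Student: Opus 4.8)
The plan is to prove the three assertions in sequence, since the size bound at the end follows cleanly once the vertex-count claims are in place.

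First I would establish that no $DD_2$-graph can have fewer than three vertices. Because the graphs under consideration have no isolated vertices, the only candidate on at most two vertices is $K_2$, say on $\{u,v\}$. Any disjoint pair $(D,D_2)$ contained in $\{u,v\}$ forces both sets to be singletons, say $D=\{u\}$ and $D_2=\{v\}$. Then $u\notin D_2$, and $u$ has only the single neighbor $v\in D_2$, so $u$ has just one neighbor in $D_2$; hence $D_2$ fails to be $2$-dominating. By symmetry the other assignment fails as well, so $K_2$ is not a $DD_2$-graph. Consequently every $DD_2$-graph, and in particular every minimal one, has at least three vertices.

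Next I would verify that $P_3$ is a minimal $DD_2$-graph. Writing $P_3$ as the path $a$--$b$--$c$, the pair $D=\{b\}$ and $D_2=\{a,c\}$ is a valid $DD_2$-pair: the vertices $a$ and $c$ are dominated by $b$, while the only vertex outside $D_2$, namely $b$, has both of its neighbors $a,c$ in $D_2$. For minimality, deleting either edge of $P_3$ creates an isolated vertex $w$; such a $w$ would have to lie in $D$ to be dominated and, having no neighbors, in $D_2$ to be $2$-dominated, contradicting $D\cap D_2=\emptyset$. Thus no proper spanning subgraph of $P_3$ is a $DD_2$-graph, so $P_3$ is minimal, and combined with the first step it is the smallest possible minimal $DD_2$-graph.

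Finally I would derive the size bound. Let $H$ be a minimal spanning $DD_2$-graph of $G$ with connected components $H_1,\dots,H_c$ on $k_1,\dots,k_c$ vertices, where $\sum_{i=1}^c k_i=n$. By definition each $H_i$ is itself a connected minimal $DD_2$-graph, so $k_i\geq 3$ by the first step, which yields $c\leq n/3$. Since each $H_i$ is connected, it has at least $k_i-1$ edges, and therefore the total number of edges of $H$ is at least $\sum_{i=1}^c (k_i-1)=n-c\geq n-n/3=2n/3$.

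None of the steps is technically deep; the only point requiring care is reading the definition correctly in the disconnected case, so that every component is forced to be a connected minimal $DD_2$-graph with at least three vertices, together with the explicit exclusion of isolated vertices when ruling out the small cases. The edge bound then reduces to the elementary fact that a connected graph on $k$ vertices has at least $k-1$ edges.
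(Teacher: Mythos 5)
Your proof is correct, and it is essentially the standard argument behind this result: the paper itself states the lemma as a citation to Miotk et al.\ without reproducing a proof, and your route (ruling out $K_2$ via disjointness, exhibiting the $DD_2$-pair $(\{b\},\{a,c\})$ for $P_3$, then counting at least $k_i-1\geq \frac{2}{3}k_i$ edges per connected component of at least $3$ vertices) is exactly the natural derivation, handling the disconnected case correctly via the paper's component-wise definition of minimality. No gaps worth noting; the only implicit step is that both $D$ and $D_2$ must be nonempty in a graph without isolated vertices, which is immediate.
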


\begin{proposition}\cite{miotk2020disjoint}
\label{DD-2-bipartite}
A connected graph $G$ is a $DD_2$-graph if and only if $G$ has a spanning bipartite subgraph $T=(A,B,E_T)$ such that $d_T(a)\geq 2$ for every $a\in A,$ and $d_T(b)\geq 1$ for every $b\in B.$
\end{proposition}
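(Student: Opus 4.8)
The plan is to prove both implications by a direct translation between a $DD_2$-pair of $G$ and the degree conditions of the spanning bipartite subgraph $T$. The key observation is that the defining inequalities $d_T(a)\geq 2$ and $d_T(b)\geq 1$ are nothing but the $2$-domination and domination requirements, once we read the bipartition $(A,B)$ as a pair consisting of a dominating set and a $2$-dominating set with the roles suitably assigned.

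For the forward direction, I would start from a $DD_2$-pair $(D,D_2)$ of $G$, where $D$ is dominating, $D_2$ is $2$-dominating, and $D\cap D_2=\emptyset$. First I reduce to the case $D\cup D_2=V$: if some vertex $v$ lies outside $D\cup D_2$, I move it into $D_2$, noting that enlarging a $2$-dominating set keeps it $2$-dominating while preserving both its disjointness from $D$ and the dominating property of $D$. Iterating, I may assume $(D,D_2)$ partitions $V$. Now set $A:=D$ and $B:=D_2$, and let $T=(A,B,E_T)$ be the spanning bipartite subgraph of $G$ consisting of all edges of $G$ with one endpoint in $D$ and the other in $D_2$. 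Since $D_2$ is $2$-dominating and $V\setminus D_2=D=A$, every $a\in A$ has at least two neighbors in $D_2=B$, so $d_T(a)\geq 2$; since $D$ is dominating and $V\setminus D=D_2=B$, every $b\in B$ has at least one neighbor in $D=A$, so $d_T(b)\geq 1$. As $A\cup B=V$, this $T$ is the required spanning subgraph.

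For the reverse direction, given a spanning bipartite subgraph $T=(A,B,E_T)$ of $G$ with $d_T(a)\geq 2$ for all $a\in A$ and $d_T(b)\geq 1$ for all $b\in B$, I set $D:=A$ and $D_2:=B$. Because $T$ is spanning and bipartite, $A$ and $B$ are disjoint and together cover $V$. The condition $d_T(b)\geq 1$ says that each vertex of $B=V\setminus A$ has a neighbor in $A$ (in $T$, hence in $G$), so $A$ is a dominating set of $G$; the condition $d_T(a)\geq 2$ says that each vertex of $A=V\setminus B$ has at least two neighbors in $B$, so $B$ is a $2$-dominating set of $G$. Hence $(D,D_2)=(A,B)$ is a $DD_2$-pair and $G$ is a $DD_2$-graph.

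I expect the only point needing care to be the normalization step in the forward direction, namely extending $D_2$ so that $D$ and $D_2$ partition $V$; once $(A,B)$ is forced to be a genuine bipartition of the whole vertex set, the degree inequalities and the (respectively, $2$-) domination conditions become the same statements. Connectivity of $G$ plays no role in the argument beyond the standing assumption that $G$ has no isolated vertices.
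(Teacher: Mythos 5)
Your proof is correct. Note that the paper does not prove this proposition at all---it is quoted from Miotk, Topp and \.Zyli\'nski \cite{miotk2020disjoint} as a known result---so there is no in-paper argument to compare against; your argument is the natural direct translation and surely the intended one: after absorbing $V\setminus(D\cup D_2)$ into the $2$-dominating set (using that supersets of $2$-dominating sets remain $2$-dominating, and disjointness from $D$ is preserved), taking $T$ to be all $A$--$B$ edges makes the degree conditions $d_T(a)\geq 2$ and $d_T(b)\geq 1$ exactly the $2$-domination and domination requirements, and the converse is the same translation read backwards. You correctly identify the normalization step as the only point needing care, and your remark that connectivity plays no role is accurate.
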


\begin{lemma}\cite{henning2013graphs} \label{lemma2.2}
    Every graph $G$ with $\delta(G)\geq 2,$ is a $DD_2$-graph.
\end{lemma}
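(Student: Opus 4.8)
The plan is to exhibit an explicit $DD_2$-pair for $G$ by taking one side of the pair to be a maximal independent set. Concretely, I would let $I$ be any (inclusion-wise) maximal independent set of $G$, which exists by a greedy argument, and claim that $(X,Y)=(I,\,V\setminus I)$ is a $DD_2$-pair, with $I$ playing the role of the dominating set and $V\setminus I$ the role of the $2$-dominating set. The two sets are disjoint by construction, so the entire content of the lemma reduces to verifying the two covering conditions.

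For the domination condition I would use only the maximality of $I$: a maximal independent set is automatically a dominating set, since any vertex $v\in V\setminus I$ with no neighbour in $I$ could be added to $I$ while preserving independence, contradicting maximality. Hence every vertex of $V\setminus I=V\setminus X$ has a neighbour in $X=I$, so $X$ dominates $G$.

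The $2$-domination condition is where the hypothesis $\delta(G)\ge 2$ enters. The vertices lying outside $Y=V\setminus I$ are exactly the vertices of $I$. Fixing $a\in I$, independence of $I$ forces every neighbour of $a$ to lie in $V\setminus I=Y$, and since $d_G(a)\ge\delta(G)\ge 2$, the vertex $a$ has at least two neighbours in $Y$. Thus every vertex of $V\setminus Y$ has at least two neighbours in $Y$, so $Y$ is a $2$-dominating set. Combining the two verifications shows that $(I,\,V\setminus I)$ is a $DD_2$-pair, and therefore $G$ is a $DD_2$-graph.

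I do not expect a serious obstacle here; the only point requiring care is assigning the roles correctly, namely using the independent set as the dominating set and its complement as the $2$-dominating set rather than the reverse, because a vertex of the complement need not have two neighbours inside $I$. As an alternative packaging one could instead invoke Proposition~\ref{DD-2-bipartite} componentwise, taking $A=I$, $B=V\setminus I$, and letting $E_T$ be the edges of $G$ joining $A$ and $B$, which yields a spanning bipartite subgraph with $d_T(a)\ge 2$ for $a\in A$ and $d_T(b)\ge 1$ for $b\in B$; but the direct construction above is cleaner and, since it never uses connectivity, handles disconnected $G$ automatically, so I would present that.
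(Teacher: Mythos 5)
Your proof is correct: taking a maximal independent set $I$, maximality gives that $I$ dominates $V\setminus I$, and independence together with $\delta(G)\ge 2$ gives every vertex of $I$ at least two neighbours in $V\setminus I$, so $(I, V\setminus I)$ is a $DD_2$-pair, with the roles assigned the right way around. The paper itself states this lemma without proof, citing Henning and Rall \cite{henning2013graphs}, and your argument is exactly the standard proof of that cited result, so there is nothing to flag beyond noting that your direct construction correctly avoids any connectivity assumption.
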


\begin{proposition}\cite{miotk2020disjoint} \label{characterization-DD-2}
A connected graph $G$ is a minimal $DD_2$-graph if and only if $G$ is a star $(K_{1,n}, n\geq 2)$, a cycle $C_4,$ or a subdivision graph $S(T)$ of a connected corona multigraph $T$. 
\end{proposition}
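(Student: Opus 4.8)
The plan is to prove both directions of Proposition~\ref{characterization-DD-2}, using throughout the reformulation supplied by Proposition~\ref{DD-2-bipartite}. I will regard a $DD_2$-pair of a connected graph as a partition of $V$ into an $A$-side and a $B$-side with $d(a)\ge 2$ for every $a\in A$ and $d(b)\ge 1$ for every $b\in B$, where $A$ serves as the dominating set and $B$ as the $2$-dominating set. The sufficiency direction is a direct verification; the necessity direction is where the work lies.

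For sufficiency I check that each listed graph is a $DD_2$-graph and that deleting any edge destroys the property. A star $K_{1,n}$ is a $DD_2$-graph (center in $A$, leaves in $B$), and deleting any edge isolates a leaf, which can be neither dominated nor $2$-dominated; hence $K_{1,n}$ is minimal. For $C_4$ the two bipartition classes give a $DD_2$-pair, and deleting an edge yields $P_4$, which is not a $DD_2$-graph by the remark on paths in Section~1, so $C_4$ is minimal. For $G=S(T)$ with $T$ a connected corona multigraph, put the subdivision vertices in $A$ and the original vertices of $T$ in $B$: each subdivision vertex has degree $2$ and each original vertex degree $\ge 1$, so this is a $DD_2$-pair. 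The point is that this pair is \emph{forced}: a leaf $\ell$ of $T$ has degree $1$, so $\ell\in B$, its incident subdivision vertex must then lie in $A$, and that vertex's other endpoint (the support of $\ell$) is pushed into $B$; since $T$ is corona, every original vertex is a leaf or a support, hence forced into $B$, whence every subdivision vertex is forced into $A$. Rerunning this forcing on $G$ minus any edge produces a vertex that is required to lie in $A$ but now has degree $1$ (or it isolates a leaf), so no $DD_2$-pair survives and $G$ is minimal.

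For necessity, let $G$ be a connected minimal $DD_2$-graph. By Proposition~\ref{DD-2-bipartite} it has a spanning bipartite subgraph $(A,B)$ with the degree bounds above; this subgraph is itself a $DD_2$-graph with no isolated vertex, so minimality forces $G$ to equal it, i.e.\ $G$ is bipartite with parts $(A,B)$ satisfying $d_G(a)\ge 2$ and $d_G(b)\ge 1$. Minimality also yields a local constraint: if some edge $ab$ had $d_G(a)\ge 3$ and $d_G(b)\ge 2$, then $G-ab$ would still satisfy the degree bounds and be a $DD_2$-graph, a contradiction; hence every edge $ab$ has $d_G(a)=2$ or $d_G(b)=1$. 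If some $a_0\in A$ has $d_G(a_0)\ge 3$, then all its neighbors have degree $1$, and since these cannot extend the graph, connectedness forces $G=K_{1,d(a_0)}$, a star. Otherwise every vertex of $A$ has degree exactly $2$, and suppressing these degree-$2$ vertices realizes $G$ as the subdivision $G=S(T)$ of a connected multigraph $T$ on the vertex set $B$.

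It remains to prove that $T$ is a corona multigraph, with $C_4=S(\text{double edge})$ as the sole exception, and this reconfiguration analysis is the main obstacle. Suppose $T$ is not corona, witnessed by a vertex $b_0\in B$ that is not a leaf and all of whose neighbors in $T$ are non-leaves. Delete an edge $b_0a_1$ at the corresponding subdivision vertex $a_1$ and attempt to repair the $DD_2$-pair by a cascade: move $a_1$ from $A$ to $B$, move its far endpoint $b_1$ from $B$ to $A$ (which then needs two $B$-neighbors, so a second subdivision vertex at $b_1$ is moved to $B$), and propagate. Each flip must preserve domination of the vertices it touches, and the delicate point—exactly what fails for the double edge—is that the anchor $b_0$ and the successive endpoints must retain a neighbor in $A$. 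When $d_T(b_0)\ge 3$, or when the cascade can be routed through another non-leaf, the repair completes and exhibits a $DD_2$-pair of $G$ minus an edge, contradicting minimality; tracing when the cascade \emph{cannot} be completed shows that the obstruction forces $T$ to be a single pair of parallel edges, i.e.\ $G=C_4$. Making this cascade argument fully rigorous—selecting the edge to delete, controlling the propagation so it terminates without stranding a leaf, and isolating $C_4$ as the unique failure—is the crux of the proof.
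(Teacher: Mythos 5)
The paper itself offers no proof of this proposition --- it is imported verbatim from Miotk, Topp and \.Zyli\'nski \cite{miotk2020disjoint} --- so your attempt can only be judged on its own merits. Most of it holds up. The sufficiency direction is correct: your forcing argument for $S(T)$ (a leaf of $T$ has degree $1$ in $S(T)$, so it lies in $B$ and forces its subdivision vertex into $A$ with both incident edges retained, which forces every support into $B$; corona-ness then forces all of $V(T)$ into $B$ and every subdivision vertex into $A$ with both edges) genuinely shows that \emph{every} certificate of $S(T)$ in the sense of Proposition \ref{DD-2-bipartite} uses all edges, hence minimality. The first phase of necessity is also sound: the certificate subgraph is itself a $DD_2$-graph, so minimality forces $G$ to equal it; the edge-deletion observation gives that every edge $ab$ has $d_G(a)=2$ or $d_G(b)=1$; a vertex of $A$ of degree $\geq 3$ yields a star by connectedness; and otherwise $G=S(T)$ for a connected multigraph $T$ on $B$.

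The genuine gap is the step you yourself flag as the crux, and it is worse than unpolished: the cascade \emph{as described} fails on a graph where minimality nevertheless fails, so the claimed dichotomy (``cascade completes, or $T$ is a double edge'') is unsubstantiated. Take $T$ to be a triple edge on $\{u,v\}$, so $G=S(T)=K_{2,3}$ with $A=\{x_1,x_2,x_3\}$, $B=\{u,v\}$; here $T$ is not corona, $G\neq C_4$, and $d_T(b_0)\geq 3$ for $b_0=u$. Your single-path repair --- delete $ux_1$, move $x_1$ to $B$, move its far endpoint $v$ to $A$, move a second subdivision vertex at $v$ (say $x_2$) to $B$ --- then strands $x_3$: if $x_3$ stays in $A$ it needs two $B$-neighbors but sees $v\in A$, and routing the cascade through $x_3$ instead fails symmetrically. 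Yet $K_{2,3}$ is indeed non-minimal, witnessed by a repair of a different shape: after deleting $ux_1$, take $A'=\{u,v\}$, $B'=\{x_1,x_2,x_3\}$ with retained edges $vx_1,vx_2,ux_2,ux_3$ --- i.e.\ \emph{both} endpoints of the multi-edge are flipped into the $A$-side, a move your cascade cannot produce. So the non-corona case requires several distinct repair patterns (flip one endpoint; flip both endpoints across a parallel class; handle cycles of $T$ all of whose vertices are non-leaves), together with an argument that the double edge is the unique configuration admitting none of them; none of this is supplied, so the necessity direction is incomplete precisely where the characterization carries its content.
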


\begin{lemma}\cite{miotk2020disjoint} \label{characterization-subdivision}
A connected graph $G$ is the subdivision graph of a corona graph if and only if $G$ is a bipartite graph $(A \cup B, E_G)$ such that $d_G(a)=2$ for every vertex $a\in A$ and every vertex $b\in B$ is a leaf or it is at a distance two from some leaf of $G.$ 
\end{lemma}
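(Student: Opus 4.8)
The plan is to prove both directions of the biconditional by exploiting the natural correspondence between a multigraph $H$ and its subdivision $S(H)$: the original vertices of $H$ form one part of the bipartition of $S(H)$, while the inserted (subdivision) vertices form the other part, each of degree exactly $2$. So throughout I would identify the candidate bipartition as $A = $ subdivision vertices and $B = $ original vertices.

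For the forward implication, suppose $G = S(H)$ for a connected corona multigraph $H$. I would set $B := V(H)$ and let $A$ be the set of subdivision vertices, one per edge of $H$. Since every edge of $S(H)$ joins a subdivision vertex to an original vertex, $(A, B)$ is a bipartition of $G$, and each $a \in A$ satisfies $d_G(a) = 2$, being incident exactly to the two endpoints of its defining edge (the simplicity of $G$ here rules out loops in $H$). To verify the condition on $B$, I would observe that for any original vertex $b$ the subdivision only interposes $A$-vertices along each incident edge, so $d_G(b) = d_H(b)$. If $b$ is a leaf of $H$, then $d_G(b) = 1$ and $b$ is a leaf of $G$; if $b$ is a non-leaf, the corona hypothesis supplies a leaf $\ell$ of $H$ adjacent to $b$, and subdividing the edge $b\ell$ yields a path $b - a - \ell$ of length $2$ in $G$ with $\ell$ a leaf of $G$, so $b$ lies at distance two from a leaf.

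For the converse, given a connected bipartite $G = (A \cup B, E_G)$ with $d_G(a) = 2$ for every $a \in A$ and the stated condition on $B$, I would reconstruct $H$ directly: take $V(H) := B$, and for each $a \in A$, whose two (necessarily distinct, since $G$ is simple and bipartite) neighbours are $b_1, b_2 \in B$, add the edge $b_1 b_2$ to $H$. Permitting multi-edges when distinct $A$-vertices share the same pair of neighbours makes $H$ a multigraph with $S(H) = G$, the vertex $a$ playing the role of the subdivision vertex on $b_1 b_2$; connectivity of $H$ follows from that of $G$, as subdivision preserves connectivity. It remains to check that $H$ is a corona multigraph. Since the edges of $H$ at $b$ are in bijection with the $A$-neighbours of $b$, we have $d_H(b) = d_G(b)$, so the leaves of $G$ lying in $B$ are precisely the leaves of $H$. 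For $b \in B$ at distance two from a leaf $\ell$ of $G$, bipartiteness forces the length-$2$ path to read $b - a - \ell$ with $a \in A$; then $a$ contributes the edge $b\ell$ to $H$, and since $\ell$ is a leaf of $H$, the vertex $b$ is adjacent to a leaf of $H$. Hence every vertex of $H$ is a leaf or adjacent to a leaf, as required.

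I expect the delicate bookkeeping to lie in the backward direction: first, that every leaf of $G$ must belong to $B$ (immediate from $d_G(a) = 2$), so the hypotheses about leaves and distance-two-from-leaves are genuinely statements about $B$; and second, that the ``distance two from a leaf'' condition translates exactly into the corona ``adjacent to a leaf'' condition, which works only because the intermediate vertex of any $B$-to-$B$ path of length two is forced into $A$ and thus corresponds to an edge of $H$. The degenerate configurations, such as $A$ or $B$ being a single vertex, would be handled separately but are routine.
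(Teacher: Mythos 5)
This lemma appears in the paper without proof---it is quoted verbatim from Miotk, Topp and {\.Z}yli{\'n}ski \cite{miotk2020disjoint}---so the only available comparison is with the standard argument that the citation encapsulates, and your proposal is precisely that argument, carried out correctly: the canonical bipartition of $S(H)$ into subdivision vertices ($A$) and original vertices ($B$), the degree identity $d_H(b)=d_G(b)$, and the reconstruction of $H$ as a multigraph whose edges are the neighbour-pairs of the $A$-vertices. You also handle the genuine pressure points: simplicity of $G$ excludes loops of $H$ (whose subdivision would create a digon), multi-edges must be permitted in the reconstruction, every leaf of $G$ lies in $B$ because $d_G(a)=2$, and bipartiteness forces the middle vertex of any length-two path from $b\in B$ to a leaf $\ell$ into $A$, which is exactly what translates ``at distance two from a leaf of $G$'' into ``adjacent to a leaf of $H$.''
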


%\begin{lemma}\cite{miotk2020disjoint} \label{superDD-2}
%    Every spanning supergraph of a $DD_2$-graph is also a $DD_2$-graph.
%\end{lemma}

\begin{proposition}\cite{miotk2020disjoint} 
\label{charac-DD-2-weaksupport-certified}
 Let $G$ be a graph with no isolated vertex. Then the following statements are equivalent:
 \begin{enumerate}
   %\item $G$ has a certified dominating set $C$ with no $u \in C$ with $N[u] \subseteq C$ (shadowed vertex).
   \item $G$ is a $DD_2$-graph.
   \item $N_G(s)\setminus (L_G\cup S_G)\neq \emptyset$ for every weak support vertex $s$ of $G.$
 \end{enumerate}
\end{proposition}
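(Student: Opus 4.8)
The plan is to prove the two implications separately, using the spanning-bipartite reformulation of the $DD_2$-property (Proposition~\ref{DD-2-bipartite}) only for the harder direction. Throughout, let $(X,Y)$ denote a hypothetical $DD_2$-pair, so $X$ is a dominating set, $Y$ is a $2$-dominating set, and $X\cap Y=\emptyset$.

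\emph{Direction $(1)\Rightarrow(2)$.} I expect this to be the routine half, provable directly with no appeal to connectivity. First I would pin down the forced membership of leaves and supports. Any leaf $\ell$ must lie in $Y$: otherwise $2$-domination would require two neighbours of $\ell$ in $Y$, which is impossible since $d_G(\ell)=1$. Since $\ell\in Y$ forces $\ell\notin X$, domination of $\ell$ then pushes its unique neighbour, the support vertex, into $X$. Hence $L_G\subseteq Y$ and $S_G\subseteq X$. Now fix a weak support vertex $s$. As $s\in X$ we have $s\notin Y$, so $s$ needs at least two neighbours in $Y$. Its single leaf neighbour supplies one; every other neighbour is a non-leaf, and any such neighbour that is itself a support vertex already lies in $X$ and therefore not in $Y$. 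Consequently the second $Y$-neighbour of $s$ must be a non-leaf, non-support vertex, i.e.\ a member of $N_G(s)\setminus(L_G\cup S_G)$, which is thus nonempty.

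\emph{Direction $(2)\Rightarrow(1)$.} Here I would construct a $DD_2$-pair explicitly. Since the $DD_2$-property and statement~$(2)$ are both inherited by, and reassembled from, connected components, it suffices to treat connected $G$. If $G$ has no leaf then $\delta(G)\ge 2$ and we are done by Lemma~\ref{lemma2.2}; if $G$ is a star it is a minimal $DD_2$-graph by Proposition~\ref{characterization-DD-2}. So assume $G$ is connected, has a leaf, and is not a star. I would start from the forced assignment $Y\supseteq L_G$ and $X\supseteq S_G$; these are disjoint because statement~$(2)$ rules out a $K_2$-component, the only way a leaf can also be a support. Each strong support vertex already has two leaf neighbours in $Y$, and for each weak support vertex $s$ statement~$(2)$ furnishes a vertex $t_s\in N_G(s)\setminus(L_G\cup S_G)$, which I place in $Y$ to serve as its second $Y$-neighbour. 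It remains to assign the internal vertices $R=V\setminus(L_G\cup S_G)$. Every vertex of $R$ having an $S_G$-neighbour (in particular every $t_s$) can safely go into $Y$, being dominated by that support vertex; the remaining internal vertices have all their neighbours inside $R$ and degree at least $2$ in $G$, so on the subgraph they induce one has minimum degree at least $2$ and can split them into the two sides via the bipartite characterisation of Proposition~\ref{DD-2-bipartite} (equivalently, Lemma~\ref{lemma2.2} applied to that core).

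The hard part will be this last step: reconciling all the choices so that the global bipartite degree conditions of Proposition~\ref{DD-2-bipartite} hold simultaneously, namely that every vertex placed in $X$ retains two neighbours in $Y$ while every vertex placed in $Y$ retains at least one neighbour in $X$. The delicacy is that an internal vertex we decide to put in $X$ cannot borrow a leaf for its second $Y$-neighbour, so its $2$-domination must be met entirely within the degree-$\ge 2$ core, and at the same time the internal vertices we put in $Y$ whose only potential $X$-neighbours are also internal must still be dominated. I would resolve this coupling across the boundary between the ``support region'' ($L_G\cup S_G$ together with the $t_s$) and the core by a careful case distinction on internal vertices according to whether they are adjacent to $S_G$, to other core vertices, or to both, and I anticipate that verifying this boundary bookkeeping is where the bulk of the technical work lies.
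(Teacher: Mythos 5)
Your direction $(1)\Rightarrow(2)$ is correct and complete: from a disjoint pair $(X,Y)$ the memberships $L_G\subseteq Y$ and $S_G\subseteq X$ are indeed forced, and the second $Y$-neighbour of a weak support vertex $s$ can be neither a leaf (a leaf adjacent to $s$ would have $s$ as its support, contradicting that $s$ is weak) nor a support vertex (supports sit in $X$, disjoint from $Y$), so it lies in $N_G(s)\setminus(L_G\cup S_G)$. Note also that the paper itself imports this proposition from \cite{miotk2020disjoint} without proof, so your attempt has to stand on its own.

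It does not: in $(2)\Rightarrow(1)$ there is a genuine gap, and one intermediate claim is outright false. You assert that the internal vertices with no $S_G$-neighbour induce a subgraph of minimum degree at least $2$. The inference fails because such a vertex has all its neighbours in $R=V\setminus(L_G\cup S_G)$, but those neighbours may lie in $N(S_G)\cap R$, the part of $R$ you have already committed to $Y$. Concretely, take the path $P_7=(v_1,\dots,v_7)$: condition $(2)$ holds, $R=\{v_3,v_4,v_5\}$, the vertices $v_3,v_5\in N(S_G)$ go to $Y$, and the remaining internal vertex $v_4$ induces a subgraph of degree $0$, so neither Lemma \ref{lemma2.2} nor Proposition \ref{DD-2-bipartite} applies to your ``core''. (In $P_7$ one can simply put $v_4$ into $X$, since both its neighbours landed in $Y$; but when remaining internal vertices are adjacent to one another no such uniform rule works: each internal vertex sent to $X$ must retain two $Y$-neighbours while each internal vertex sent to $Y$ must retain an $X$-neighbour.) You explicitly defer exactly this reconciliation as ``boundary bookkeeping'', but that bookkeeping is the entire substance of the implication --- the easy reductions (connected case, no-leaf case via Lemma \ref{lemma2.2}, stars, $K_2$ excluded by $(2)$) are all present, while the one step carrying the mathematical weight is missing. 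A useful simplification you did not exploit: since any superset of a $2$-dominating set is $2$-dominating, one may assume the pair partitions $V$ and default all unresolved vertices to $Y$; the residual task is then to select the $X$-side inside the core so that the two degree conditions of Proposition \ref{DD-2-bipartite} hold across the boundary, which is a relative version of Lemma \ref{lemma2.2} with boundary constraints and requires an actual argument (as in \cite{miotk2020disjoint}), not a case-analysis promissory note.
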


Using the above result, whether a given graph $G=(V,E)$ is a $DD_2$-graph can be recognized in polynomial time. 

\begin{lemma}\cite{zito1999induced} \label{max-matching}
    If $G$ is a $(\delta,\Delta)$-graph with $n$ vertices then the maximum cardinality induced matching in $G$, $|M|\leq \frac{n\Delta}{2(\Delta+\delta-1)}.$ 
\end{lemma}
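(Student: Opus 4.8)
The plan is to prove the bound by a double-counting argument on the edges crossing the cut between the matched vertices and the rest of the graph. Write $k=|M|$ and let $V_M=\{u_1,v_1,\dots,u_k,v_k\}$ be the $2k$ endpoints of the $k$ matching edges $u_iv_i\in M$, so that $|V_M|=2k$. Set $W=V\setminus V_M$, so $|W|=n-2k$. The key structural fact I would exploit is that, because $M$ is an \emph{induced} matching, the subgraph $G[V_M]$ consists of exactly the $k$ edges of $M$ and nothing else; in particular, every neighbor of $u_i$ other than $v_i$ must lie in $W$, and likewise for each $v_i$.

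Next I would bound the number of edges crossing the cut $(V_M,W)$ from the $V_M$ side. Since $G$ is a $(\delta,\Delta)$-graph, every vertex has degree at least $\delta$, and exactly one of the edges incident to a matched vertex stays inside $V_M$ (its matching edge). Hence each of the $2k$ vertices of $V_M$ sends at least $\delta-1$ edges into $W$, so the number of edges between $V_M$ and $W$ is at least $2k(\delta-1)$.

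Then I would bound the same quantity from the $W$ side. Each vertex of $W$ has degree at most $\Delta$, so the sum of degrees over $W$ is at most $\Delta(n-2k)$; since each edge crossing the cut has exactly one endpoint in $W$, the number of edges between $V_M$ and $W$ is at most $\Delta(n-2k)$.

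Combining the two estimates gives $2k(\delta-1)\le \Delta(n-2k)$, and rearranging yields $2k(\Delta+\delta-1)\le \Delta n$, i.e. $|M|=k\le \frac{n\Delta}{2(\Delta+\delta-1)}$, as claimed. There is no serious obstacle here beyond the single observation that requires care: it is the \emph{induced} (rather than merely ordinary) matching property that forces all non-matching edges incident to $V_M$ to leave $V_M$, and this is precisely what legitimizes the lower bound $2k(\delta-1)$ on the crossing edges. As a sanity check, when $\delta=1$ the bound degrades to the trivial $n/2$, which is consistent.
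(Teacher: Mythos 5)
Your proof is correct. Note that the paper does not prove this statement at all: it is imported verbatim as a cited lemma from Zito (1999), so there is no in-paper argument to compare against. Your double count of the edges across the cut $(V_M, V\setminus V_M)$ is essentially the standard proof of Zito's bound: the induced-matching property (in the paper's sense $G[V_M]=(V_M,M)$) correctly forces every non-matching edge at a matched vertex to leave $V_M$, giving the lower bound $2k(\delta-1)$ on crossing edges, while the degree-sum bound over $V\setminus V_M$ gives the upper bound $\Delta(n-2k)$; the rearrangement $2k(\Delta+\delta-1)\le \Delta n$ is right, and the degenerate cases ($\delta=1$, and $V_M=V$ forcing $\delta=1$) are consistent with the claimed inequality.
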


\section{Approximation Algorithms}
In this section, we design polynomial time approximation algorithms for \textsc{Min-$DD_2$} and \textsc{Max-$DD_2$}. 

\begin{theorem}
%Given any graph, the problem of determining a minimal spanning $DD_2$-graph of minimum size 
\textsc{Min-$DD_2$} is approximable within a factor of $3$.
\end{theorem}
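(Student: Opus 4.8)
The plan is to exploit the fact that \emph{every} minimal spanning $DD_2$-subgraph of $G$ has size confined to a narrow band, so that producing any one of them already yields a $3$-approximation. Concretely, I would compute some minimal spanning $DD_2$-subgraph $H=(V,E')$ of $G$ and then establish the two-sided estimate $\frac{2n}{3}\le OPT \le |E'| < 2n$, where $OPT$ is the optimal (minimum) value. The left inequality is exactly Lemma~\ref{minimal-DD-2size}, since $OPT$ is itself the size of a minimal spanning $DD_2$-graph on $n$ vertices, and the middle inequality holds because $H$ is feasible. Combining these gives $\frac{|E'|}{OPT} < \frac{2n}{2n/3}=3$, so $H$ is within a factor $3$ of optimal.

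To make this algorithmic I would first produce a minimal spanning $DD_2$-subgraph in polynomial time. Starting from $H\leftarrow G$ (a $DD_2$-graph by hypothesis), repeatedly delete any edge $e$ for which $H-e$ is still a $DD_2$-graph, stopping when no such edge remains. Recognizing whether the current graph is a $DD_2$-graph is polynomial by Proposition~\ref{charac-DD-2-weaksupport-certified}, and at most $|E|$ edges are ever deleted, so the procedure is polynomial. The one point needing justification is that irremovability of every single edge already certifies \emph{global} minimality: here I would use that the $DD_2$-property is monotone under edge additions (adding edges can only help both the dominating set $D$ and the $2$-dominating set $D_2$). Hence if some proper spanning subgraph $H'$ of $H$ were a $DD_2$-graph, then for any edge $e\in E(H)\setminus E(H')$ the graph $H-e\supseteq H'$ would also be a $DD_2$-graph, contradicting the stopping condition. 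Thus the output $H$ is genuinely a minimal spanning $DD_2$-graph (and, being a $DD_2$-graph, has no isolated vertex).

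The crux is the upper bound $|E'|<2n$, which I would obtain structurally from Proposition~\ref{characterization-DD-2}. Each connected component of the minimal $DD_2$-graph $H$ is a star $K_{1,m}$ with $m\ge 2$, a cycle $C_4$, or the subdivision graph $S(T)$ of a connected corona multigraph $T$. Writing $n_i,m_i$ for the vertex and edge counts of the $i$-th component, a star gives $m_i=n_i-1$, a $C_4$ gives $m_i=n_i=4$, and a subdivision $S(T)$ of a corona multigraph with $p\ge 1$ vertices and $q$ edges gives $n_i=p+q$ and $m_i=2q$, so that $m_i=2q<2(p+q)=2n_i$. In every case $m_i<2n_i$, whence summing over components yields $|E'|=\sum_i m_i<2\sum_i n_i=2n$, as required.

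I expect the main obstacle to be pinning down these two quantitative facts and checking they interlock: that \emph{no} minimal $DD_2$-graph can be edge-dense (the $<2n$ bound, driven entirely by the doubling of edges in a subdivision never outrunning the added subdivision vertices), while the cheapest minimal $DD_2$-graph still costs at least $\frac{2n}{3}$. Once both the global-minimality argument and the component-wise edge count are in place, the ratio $3$ follows immediately, and no optimization beyond producing \emph{any} minimal spanning $DD_2$-subgraph is actually needed.
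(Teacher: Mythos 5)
Your proof is correct, and it rests on the same two pillars as the paper's: every feasible solution of \textsc{Min-$DD_2$} (i.e., every minimal spanning $DD_2$-subgraph) has size squeezed between $\frac{2n}{3}$ (Lemma \ref{minimal-DD-2size}) and roughly $2n$ (via the component characterization of Proposition \ref{characterization-DD-2}), so producing \emph{any} feasible solution already gives ratio $<3$. Where you diverge is in the two supporting steps, and in both cases your version is cleaner. Algorithmically, the paper invokes Proposition \ref{DD-2-bipartite} to compute a degree-constrained spanning bipartite subgraph $H=(A\cup B,E_H)$ and then analyzes $H$ as if it were already a minimal spanning $DD_2$-subgraph, leaving the minimalization step implicit; you instead run greedy edge-deletion with recognition via Proposition \ref{charac-DD-2-weaksupport-certified}, and you explicitly supply the missing justification that local irremovability certifies global minimality, via monotonicity of the $DD_2$-property under edge addition --- a point that genuinely needs an argument, since feasibility in \textsc{Min-$DD_2$} \emph{requires} minimality (your only small omission: the recognition test must also reject graphs with isolated vertices, as Proposition \ref{charac-DD-2-weaksupport-certified} presupposes none; a deletion isolating a vertex never preserves the $DD_2$-property, so this is a trivial extra check). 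In the counting step, the paper's accounting for subdivision components (``exactly $2$ edges to $B$,'' $2|A|\le 2(|V(S(T))|-2)$) is opaque, though it yields the marginally tighter bound $|E_H|\le 2(n-2)$; your direct count for $S(T)$ with $T$ having $p$ vertices and $q$ edges, namely $n_i=p+q$ and $m_i=2q<2n_i$, is transparent and is in fact the same style of argument the paper itself deploys later in its factor-$3$ analysis of \textsc{Max-$DD_2$} (there phrased as deleting $k$ subdivision vertices to get a tree, giving $n'+k-1<2n'$ edges). The paper's route buys a slightly sharper constant inside the same final ratio; yours buys a self-contained correctness proof of the algorithm and a component bound that is easier to verify.
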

\begin{proof}
Let $G=(V, E)$ be a given instance of \textsc{Min-$DD_2$}. Let ${E}^* \subseteq E$ such that $(V, {E}^*)$ is a minimal spanning $DD_2$-subgraph of $G$ and $|{E}^*|$ is of minimum cardinality. By Lemma \ref{minimal-DD-2size}, we have $|{E}^*| \geq \frac{2n}{3}$.

Since $G$ is a $DD_2$-graph, by Proposition \ref{DD-2-bipartite} there exists a bipartition $V=A \cup B$ such that $H=(A \cup B,E_{H})$ is a spanning bipartite subgraph of $G$, $d_{H}(a)\geq 2$ for every $a\in A,$ and $d_{H}(b)\geq 1$ for every $b\in B.$ Such a bipartition of $V$ can be computed in polynomial time \cite{miotk2020disjoint}.
It can be easily seen that $A$ and $B$ are disjoint dominating set and a $2$-dominating set of $G,$ respectively. Thus, $|B|\geq 2$ implies $|A|\leq n-2.$

From Proposition \ref{characterization-DD-2}, the minimal spanning $DD_2$-subgraph $H$ consists of components of the form $K_{1,n_1}, 2 \leq n_1 \leq n$, a cycle $C_4$ or a subdivision graph of a corona multigraph, $S(T)$. Note that the number of edges in a component isomorphic to $K_{1,n_1}$ is $n_1 \leq 2((n_1+1)-2), \forall n_1\geq 2,$ and the number of edges in a component isomorphic to $C_4$ is $4\leq 2(|V(C_4)|-2).$ Also, any component that is a subdivision of a corona multigraph $S(T)$ has exactly $2$ edges to $B,$ due to Lemma \ref{characterization-subdivision}. Thus, for any $S(T)$ component, $2|A|\leq 2(|V(S(T))|-2)$. Since the sum of the number of edges in each component of $H$ with $n'$ vertices is at most $2(n'-2),$ we have  $|E_H|\leq 2(n-2).$ 

Now, $\frac{|E_H|}{|{E}^*|}\leq 2(n-2)\times \frac{3}{2n}<3.$ Therefore, the \textsc{(Min-$DD_2$)} is approximable within a factor of $3$ for any graph.
\end{proof}

\begin{theorem}
\textsc{Max-$DD_2$} is approximable within a factor of $3$.
\end{theorem}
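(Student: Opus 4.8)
The plan is to mirror the argument just used for \textsc{Min-$DD_2$}, simply exchanging the roles of the two bounds on the number of edges in a minimal spanning $DD_2$-subgraph. The key observation is that \emph{every} minimal spanning $DD_2$-subgraph of an $n$-vertex graph $G$ has its edge count sandwiched between $\frac{2n}{3}$ and $2(n-2)$, and this single sandwich already yields a factor of $3$ in either optimization direction.

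First I would produce a feasible solution. Since $G$ is a $DD_2$-graph, Proposition \ref{DD-2-bipartite} gives, in polynomial time, a spanning bipartite subgraph $H_0=(A\cup B,E_{H_0})$ with $d_{H_0}(a)\geq 2$ for every $a\in A$ and $d_{H_0}(b)\geq 1$ for every $b\in B$, so that $(A,B)$ is a $DD_2$-pair. I would then delete edges greedily, each time retaining the $DD_2$-property (checkable in polynomial time via Proposition \ref{charac-DD-2-weaksupport-certified}), until no further edge can be removed, thereby obtaining a minimal spanning $DD_2$-subgraph $H=(V,E_H)$. By Lemma \ref{minimal-DD-2size} we then have $|E_H|\geq \frac{2n}{3}$, which serves as the lower bound on the value returned by the algorithm.

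Next I would bound the optimum from above. Let $E^*$ be an edge set such that $(V,E^*)$ is a minimal spanning $DD_2$-subgraph of $G$ of \emph{maximum} size. By Proposition \ref{characterization-DD-2}, every component of $(V,E^*)$ is a star $K_{1,n_1}$, a cycle $C_4$, or a subdivision $S(T)$ of a connected corona multigraph. Exactly as computed in the preceding proof, each such component on $n'$ vertices carries at most $2(n'-2)$ edges, so summing over all components gives $|E^*|\leq 2(n-2)$.

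Combining the two estimates yields $\frac{|E^*|}{|E_H|}\leq \frac{2(n-2)}{2n/3}=\frac{3(n-2)}{n}<3$, establishing the claimed ratio. The only point requiring genuine care — the closest thing to an obstacle — is the feasibility of the output: I must ensure that the greedy deletion step terminates at a truly minimal spanning $DD_2$-subgraph (so that the structural characterization of Proposition \ref{characterization-DD-2} applies to both $H$ and to $E^*$), and that each minimality test runs in polynomial time, which is guaranteed by the recognition procedure following Proposition \ref{charac-DD-2-weaksupport-certified}. Beyond verifying this, the argument is entirely symmetric to the minimization case.
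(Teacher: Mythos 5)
Your proposal is correct and follows essentially the same route as the paper: a lower bound of $\frac{2n}{3}$ on every minimal spanning $DD_2$-subgraph via Lemma \ref{minimal-DD-2size}, and an upper bound on the optimum through the component characterization of Proposition \ref{characterization-DD-2}, the only cosmetic differences being that you recycle the per-component bound $2(n'-2)$ from the \textsc{Min-$DD_2$} proof (slightly tighter than the paper's $|E^*|<2n$, which is re-derived there by a spanning-tree count on the $S(T)$ components) and that you make the construction of the feasible solution explicit. Your greedy deletion step is indeed sound, for the one-line reason you did not state: on a fixed vertex set, $DD_2$-ness is monotone under edge addition (a $DD_2$-pair remains a $DD_2$-pair when edges are added), so if some component still had a proper spanning $DD_2$-subgraph, every edge outside that subgraph would be individually removable, and hence a graph from which no single edge can be deleted is minimal component-wise.
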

\begin{proof}
Given a $DD_2$-graph $G=(V,E)$ (an instance of \textsc{Max-$DD_2$}), let $H^*=(V, E^*)$ be a maximum size minimal spanning $DD_2$-subgraph of $G.$ By Proposition \ref{characterization-DD-2}, each component of $H^*$ is isomorphic to $K_{1,n_1}, n_1\geq 2,$ a cycle $C_4$ or a subdivision graph of a connected corona multigraph.
    
Now, we claim that $|E^*| < 2n$, where $n$ is the number of vertices in the graph $G.$ It is easy to observe that the number of edges in $K_{1,n_1}$ component is $n_1\leq (n_1+1), \forall n_1\geq 2,$ and in $C_4$ component the number of edges is $4\leq |V(C_4)|.$ It remains to show that the subdivision graph $G'=(V', E')$ of a connected corona multigraph having $n'$ vertices have at most $2n'$ number of edges. We prove this in the following constructive way. First, delete a minimum number of vertices (say $k$) such that the graph $G'$ becomes a tree (say $T$). %Note that, the deleted vertices must be the vertices that subdivide an edge of cycles in the multigraph because every other vertex is a cut vertex of $G'$. Also, $k < n'$. 
Note that the deleted vertices must be the vertices that correspond to the subdivided edge of the multigraph because deleting the original vertex disconnects the graph.
Thus, $|E(T)|=|V(T)-1|=n'-k-1.$ Since these $k$ vertices contribute exactly $2$-edges to the minimal $DD_2$-subgraph $G',$ $|E(G')|=|E(T)|+2k=n'-k-1+2k=n'+k-1< 2n',$ as $k<n'.$ Moreover, the sum of the number of edges in each component is at most $2n,$ which implies $|E^*| < 2n.$

Let $H=(V, E_H)$ be a minimal spanning $DD_2$-subgraph of $G$. Then
by Lemma \ref{minimal-DD-2size}, $|E_H|\geq \frac{2n}{3}$. Thus, $\frac{|E^*|}{|E_H|} < 2n\times \frac{3}{2n}=3.$  
    
Therefore, \textsc{Max-$DD_2$} is approximable within a factor of $3$ for any graph.
\end{proof}

We conclude the above two theorems with this corollary stated below.
\begin{corollary}\label{3-reg-APX}
 \textsc{Min-$DD_2$} and \textsc{Max-$DD_2$} are in class \textsc{APX} for any graph $G.$    
\end{corollary}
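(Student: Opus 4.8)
The plan is to appeal directly to the definition of the class \textsc{APX} together with the two theorems just established. Recall that a minimization (respectively, maximization) problem belongs to \textsc{APX} precisely when it is an \textsc{NPO} problem and it admits a polynomial-time algorithm whose output value is guaranteed to lie within a fixed constant factor of the optimum on every instance. Accordingly, I would split the argument into two tasks: verifying that \textsc{Min-$DD_2$} and \textsc{Max-$DD_2$} are genuine \textsc{NPO} problems, and then exhibiting the required constant-factor approximation, which is already in hand.

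For the first task, I would check the defining conditions of an \textsc{NPO} problem. The feasible solutions are the minimal spanning $DD_2$-subgraphs $H=(V,E')$ of the input $DD_2$-graph $G$; by the structural characterization of Proposition \ref{characterization-DD-2}, together with the polynomial-time recognizability noted after Proposition \ref{charac-DD-2-weaksupport-certified}, one can test in polynomial time whether a candidate edge set $E'$ yields such a subgraph. The objective value $|E'|$ is computable in polynomial time and is polynomially bounded, since $|E'|\leq \binom{n}{2}$. Hence both problems qualify as \textsc{NPO} problems.

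For the second task, the constant-factor guarantee is supplied verbatim by the two preceding theorems: each furnishes a polynomial-time procedure---the bipartition of Proposition \ref{DD-2-bipartite} is computed in polynomial time, and the associated minimal spanning $DD_2$-subgraph is produced constructively---whose value is within a factor of $3$ of the optimum. Combining the \textsc{NPO} membership with this factor-$3$ approximability immediately yields that \textsc{Min-$DD_2$} and \textsc{Max-$DD_2$} lie in \textsc{APX}.

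I do not anticipate a genuine obstacle, as the statement is a formal consequence of the results above. The only point that warrants care is the \textsc{NPO} bookkeeping---in particular, confirming that feasibility of a proposed minimal spanning $DD_2$-subgraph is polynomial-time verifiable---rather than any new combinatorial argument.
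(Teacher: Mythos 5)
Your proposal is correct and follows essentially the same route as the paper, which states this corollary as an immediate consequence of the two factor-$3$ approximation theorems; your additional \textsc{NPO} bookkeeping (polynomial-time feasibility checking via Propositions \ref{characterization-DD-2} and \ref{charac-DD-2-weaksupport-certified}, polynomially bounded objective) is standard and merely makes explicit what the paper leaves implicit.
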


Next, we improve these approximation factors for \textsc{Min-$DD_2$} and \textsc{Max-$DD_2$}, for 3-regular graphs.

\begin{algorithm2e}
%\SetAlgoLined
\textbf{Input:} A $3$-regular $DD_2$-graph $G=(V,E).$\\
\textbf{Output:} A minimum size minimal spanning $DD_2$-subgraph $H$ of $G.$\\
\Begin{
Let $V' = V$ and $E'=E$;\\
$A'=\emptyset$;\\
\While{$\exists$ an edge $uv\in E'$}{
    $A'=A'\cup \{u,v\};$\\
    Delete all the vertices in $N[u]\cup N[v]$ and all the edges incident on them from $(V', E')$;
  }
  Let $T$ be the remaining vertices in $V'$;\\
  $A=A'\cup T$;\\
  $B=V\setminus A;$\\
  Construct edge set $X$ by selecting one edge incident on each vertex in $T$;\\
  %Delete any one edge incident on each vertex of $T;$\\
  $E_H=\{uv \in E \mid u\in A, v\in B\} \setminus X;$\\
  \Return $H=(A\cup B, E_H)$;
}
\caption{\textsc{Approx-Min-$DD_2$}}
\label{Min-DD-2}
\end{algorithm2e}

\begin{theorem}
For 3-regular graphs, \textsc{Min-$DD_2$} is approximable within a factor of $1.8$.
\end{theorem}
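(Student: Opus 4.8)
The plan is to show that the bipartition $(A,B)$ produced by Algorithm~\ref{Min-DD-2} yields a feasible solution $H$, to bound $|E_H|$ by $\frac{6n}{5}$, and then to obtain the ratio $1.8$ from the lower bound $|E^*|\ge \frac{2n}{3}$ of Lemma~\ref{minimal-DD-2size}.

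First I would establish the structural properties of the output. The endpoints selected in the while loop form an induced matching $M$ of size, say, $k$, so $|A'|=2k$; indeed, once an edge $u_iv_i$ is chosen its closed neighborhood is deleted, so no later endpoint can be adjacent to $u_i$ or $v_i$. Writing $t=|T|$, I would argue that every vertex $w\in T$ has all three of its $G$-neighbors in $B$: $w$ can have no neighbor in $A'$ (otherwise $w$ would have been deleted together with that endpoint's neighborhood), and $T$ spans no edge (otherwise that edge would survive in $E'$, contradicting termination of the loop). Consequently $d_H(w)=3-1=2$ after removing the single edge of $X$ incident on $w$, while each matching endpoint has exactly two $B$-neighbors, its third neighbor being its matching partner, which lies in $A$. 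Hence every $a\in A$ has $d_H(a)=2$ and every $b\in B$ retains at least one neighbor in $A$, so $(A,B)$ is a $DD_2$-pair; invoking Proposition~\ref{DD-2-bipartite} together with Lemma~\ref{characterization-subdivision} (pruning any non-minimal component further if necessary) shows that $H$ is a minimal spanning $DD_2$-subgraph of $G$.

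Next I would count $|E_H|$. Since $E_H$ contains only $A$--$B$ edges and every $a\in A$ has degree exactly $2$, we get $|E_H|=2|A|=2(2k+t)$. The crux is bounding $|A|=2k+t$, and here I would use a degree count in the $3$-regular graph $G$. The only edges inside $A$ are the $k$ matching edges, so summing degrees over $A$ gives $3(2k+t)=2k+e(A,B)$, whence $e(A,B)=4k+3t$; summing degrees over $B$ gives $3(n-2k-t)=2e(B,B)+e(A,B)$, so that $2e(B,B)=3n-10k-6t\ge 0$. This yields $10k+6t\le 3n$, hence $10k+5t\le 3n$, i.e. $2k+t\le \frac{3n}{5}$, and therefore $|E_H|=2(2k+t)\le \frac{6n}{5}$.

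Finally, combining with $|E^*|\ge \frac{2n}{3}$ from Lemma~\ref{minimal-DD-2size} gives $\frac{|E_H|}{|E^*|}\le \frac{6n/5}{2n/3}=1.8$. The main obstacle I anticipate is the structural bookkeeping: proving rigorously that the only $A$--$A$ edges are the matching edges (which needs the induced-matching property of $M$ together with the absence of $T$--$T$ and $T$--$A'$ edges), and confirming that $H$ is genuinely \emph{minimal} rather than merely a $DD_2$-subgraph, since a careless construction could leave an even-cycle component that is not minimal. The nonnegativity of $e(B,B)$ is the clean inequality that drives the whole estimate.
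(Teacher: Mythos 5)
Your proof is correct, and it runs on the same algorithm as the paper (Algorithm \ref{Min-DD-2}) with the same feasibility analysis and the same final ratio computation; the genuine difference is in the counting, where your route is more elementary and self-contained. The paper bounds $a=|A'|=2k$ by invoking Zito's induced-matching bound (Lemma \ref{max-matching}), giving $k\leq \frac{3n}{10}$, and separately bounds $t$ via an edge count, combining the two to get $|E_H|\leq n+\frac{a}{3}\leq \frac{6n}{5}$. You instead extract everything from a single double count: summing $G$-degrees over $A$ and over $B$ gives $2e(B,B)=3n-10k-6t\geq 0$, i.e.\ $10k+6t\leq 3n$, and since $|E_H|=4k+2t=2(2k+t)$ and $10k+5t\leq 10k+6t$, this one inequality already yields $|E_H|\leq \frac{6n}{5}$. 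In fact your inequality $10k+6t\leq 3n$ is exactly the paper's Equation (\ref{equ1}) in disguise (with $a=2k$ it reads $5k+3t\leq \frac{3n}{2}$), and that inequality alone also implies $a\leq \frac{3n}{5}$; so the appeal to Lemma \ref{max-matching} is actually redundant in the paper's proof, and your version makes this explicit. A second point where you are more careful than the paper: the paper's minimality argument only shows that deleting an edge of $E_H$ destroys the particular pair $(A,B)$, not that the resulting subgraph fails to be a $DD_2$-graph (an alternating even-cycle component such as $C_6$, whose spanning path $P_6$ is still $DD_2$, is exactly the kind of worry you raise). Your fallback of pruning any non-minimal component to obtain a genuinely minimal spanning $DD_2$-subgraph is sound and costs nothing, since pruning only removes edges, so the bound $|E_H|\leq \frac{6n}{5}$ persists and, with $|E^*|\geq \frac{2n}{3}$ from Lemma \ref{minimal-DD-2size}, the ratio $\frac{6n/5}{2n/3}=1.8$ follows as claimed.
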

\begin{proof}
Given a 3-regular $DD_2$-graph $G=(V,E)$, let $E^*\subseteq E$ such that $(V, E^*)$ is a minimal spanning $DD_2$-subgraph of $G$ and $|E^*|$ is minimum among all such subgraphs of $G$.
By Lemma \ref{minimal-DD-2size}, we have $|E^*| \geq \frac{2n}{3}.$ 

Let $H=(A\cup B, E_H)$ be the graph returned by Algorithm \ref{Min-DD-2}. Now, we claim that $H$ is a minimal spanning $DD_2$-subgraph of $G$. From the construction of set $A'$ in Algorithm \ref{Min-DD-2}, it follows that the edge set in $G[A']$ forms a maximal induced matching $M$ in $G$. This implies that each vertex in $A'$ has exactly two neighbors in $B$. Since $T= V \setminus (A' \cup N(A'))$, it follows that $T$ is an independent set in $G$ and all its neighbors are in $B$. It is important to note that $E_H$ does not contain any edge from the edge set $X$. This implies that each vertex in $A$ has exactly two neighbors in $B$. Now it is easy to observe that each vertex in $B$ has at least one neighbor in $A$.    
Hence $H$ is a spanning $DD_2$-subgraph of $G$ ($A$ is a dominating set and $B$ is a 2-dominating set of $G$). Since each vertex in $A$ has exactly two neighbors in $B$, removing any edge from $E_H$ would result in a non-$DD_2$-subgraph $H$ of $G,$ as $B$ would not be a 2-dominating set of $G.$ Hence the claim.

Now, let $a=|A'|$ and $t=|T|.$ As $G$ is a $3$-regular graph, $|E|=\frac{3n}{2}.$ Since every vertex of $A'$ has exactly 2 neighbors in $B$ and every vertex of $T$ has exactly 3 neighbors in $B,$ before deleting one edge incident on each vertex of $T,$ the total number of edges incident on $A$ is 
\begin{equation}
\label{equ1}
  \frac{a}{2}+2a+3t\leq \frac{3n}{2} \Rightarrow t \leq \frac{1}{2}(n-\frac{5}{3}a).  
\end{equation}

By Lemma \ref{max-matching}, $|M|\leq \frac{3n}{10}$. Moreover, every edge of the maximal induced matching $M,$ contributes exactly $2$ vertices to $A'.$ Thus, 
\begin{equation}
\label{equ2}
    \frac{a}{2} \leq \frac{3n}{10} \Rightarrow a\leq \frac{3n}{5}.
\end{equation}

Since, in subgraph $H,$ each vertex in $A$ has exactly two neighbors in $B,$ we have $|E_H|=2a+2t.$ Using Equation (\ref{equ1}) and (\ref{equ2}), we obtain,
\begin{equation*}
   |E_H|\leq 2a+n-\frac{5}{3}a=n+\frac{1}{3}a \leq n+\frac{n}{5} =\frac{6n}{5}.
\end{equation*}
Thus, $\frac{|E_H|}{|E^*|}\leq \frac{6n}{5} \times \frac{3}{2n} =1.8.$ Therefore, \textsc{Min-$DD_2$} is approximable within a factor of $1.8$ for $3$-regular graphs.
\end{proof}

\begin{theorem}
\label{MaxDD_2APX}
For $3$-regular graphs, \textsc{Max-$DD_2$} is approximable within a factor of $1.5$. 
\end{theorem}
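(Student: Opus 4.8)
The plan is to mirror the structure of the preceding \textsc{Min-$DD_2$} theorem but to push the edge count of the returned subgraph \emph{up} rather than down, since now we want a minimal spanning $DD_2$-subgraph with many edges. First I would establish the optimum-size lower bound: by the structural characterization (Proposition \ref{characterization-DD-2}) every component of $H^*$ is a $K_{1,n_1}$, a $C_4$, or a subdivision $S(T)$ of a corona multigraph, and in each such component each vertex of the dominating side $A$ contributes exactly two edges while the $2$-dominating side $B$ contributes the rest. For a 3-regular graph I would argue that $|E^*|\le 2n$ cannot be tight and in fact a sharper bound holds; the natural target is $|E^*|\le \frac{3n}{2}=|E|$ trivially, but to get $1.5$ I expect to need $|E^*| \le \frac{3n}{2}$ played against a \emph{lower} bound of $|E_H|\ge n$ for the subgraph my algorithm returns. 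So the key quantities are an upper bound on OPT and a lower bound on the value of a greedily constructed solution.

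The key steps, in order, would be: (i) design an algorithm analogous to Algorithm \ref{Min-DD-2} that greedily builds the dominating set $A$ via a maximal induced matching $M$ together with the leftover independent set $T$, but now \emph{retains} as many cross edges as minimality permits instead of deleting them; (ii) verify via Proposition \ref{DD-2-bipartite} that the output $(A\cup B, E_H)$ is a spanning $DD_2$-subgraph, and via Proposition \ref{characterization-DD-2} or the definition that it is \emph{minimal} (each $A$-vertex has exactly two $B$-neighbors, each $B$-vertex at least one $A$-neighbor); (iii) count $|E_H|$ exactly as $2a+2t$ where $a=|A'|$, $t=|T|$, using that in a minimal $DD_2$-subgraph every vertex of $A$ keeps precisely two edges to $B$; (iv) bound OPT from above by $\frac{3n}{2}$ (or the relevant structural quantity) and $|E_H|$ from below using the edge-balance equation $\frac{a}{2}+2a+3t \le \frac{3n}{2}$ from the companion proof together with $|M|\le \frac{3n}{10}$ from Lemma \ref{max-matching}, to squeeze the ratio $\frac{|E^*|}{|E_H|}$ below $1.5$.

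The main obstacle I anticipate is the reversal of the inequality direction in the analysis: for \textsc{Min} the algorithm wants $|E_H|$ small, so the bound $|E_H|\le \frac{6n}{5}$ combined with $|E^*|\ge \frac{2n}{3}$ was the right pairing, but for \textsc{Max} the \emph{same} greedy construction yields a feasible minimal $DD_2$-subgraph whose size $|E_H|=2a+2t$ is now a \emph{lower} bound on what we can achieve, and we must show it is at least $\frac{2}{3}|E^*|$. Concretely I expect to prove a lower bound $|E_H|\ge n$ — since $a+t = |A|$ and every $A$-vertex carries two edges, $|E_H|=2|A|$, and in a 3-regular graph $|A|\ge \frac{n}{2}$ should follow from the domination/covering structure — and to pair it against $|E^*|\le \frac{3n}{2}$, giving $\frac{|E^*|}{|E_H|}\le \frac{3n/2}{n}=\frac{3}{2}$. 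The delicate part is justifying $|A|\ge \frac{n}{2}$ rigorously (equivalently $|B|\le \frac{n}{2}$) from 3-regularity and the minimality constraints, rather than merely from the feasibility bound $|B|\ge 2$; this is where the $3$-regular hypothesis must be used in full, likely again through the induced-matching bound of Lemma \ref{max-matching} and the counting identity \eqref{equ1}.
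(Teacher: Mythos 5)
Your plan fails at its central step: the lower bound $|E_H|\ge n$ (equivalently $|A|\ge \frac{n}{2}$) is false for the greedy construction, and in fact no algorithm can guarantee it. Take $G=K_{3,3}$: there is no induced matching of size $2$, so the greedy phase picks a single edge $uv$, and $N[u]\cup N[v]=V$ forces $T=\emptyset$; thus $A=\{u,v\}$, $|A|=\frac{n}{3}$ and $|E_H|=4=\frac{2n}{3}<n$. Worse, one checks that \emph{every} minimal spanning $DD_2$-subgraph of $K_{3,3}$ is a disjoint union of two $P_3$'s (a single component on six vertices would have to be a star, a $C_4$, or $S(T)$ with $|V(T)|+|E(T)|=6$, and no connected corona multigraph realizes this), so even the optimum satisfies $|E^*|=\frac{2n}{3}$. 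Hence the universal lower bound of Lemma \ref{minimal-DD-2size} is tight and cannot be improved to $n$, and pairing your trivial upper bound $|E^*|\le\frac{3n}{2}$ with the only available lower bound $\frac{2n}{3}$ yields ratio $\frac{9}{4}$, not $1.5$. Note also that the edge-balance inequality \eqref{equ1} and Lemma \ref{max-matching} bound $a$ and $t$ from \emph{above}, so they point in the wrong direction for establishing $|A|\ge\frac{n}{2}$; the domination structure alone only gives $|B|\le 2|A|$, i.e.\ $|A|\ge\frac{n}{3}$, recovering $\frac{2n}{3}$ again.

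The missing idea — and the paper's actual route — is to sharpen the \emph{upper} bound on the optimum instead of the lower bound on the algorithm's output: for $3$-regular $G$ one shows $|E^*|\le n$. By Proposition \ref{characterization-DD-2} each component of $H^*$ is a $K_{1,n_1}$, a $C_4$, or a subdivision $S(T)$ of a corona multigraph; the first two have at most as many edges as vertices, and an $S(T)$ component can contain at most one cycle: if two induced cycles were joined by a path, Lemma \ref{characterization-subdivision} forces the attachment vertices onto the $B$-side of the bipartition (the $A$-side has degree exactly $2$), and the leaf/corona condition then forces such a vertex to have degree at least $4$, impossible in a subgraph of a $3$-regular graph. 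Hence every component is at worst unicyclic, summing gives $|E^*|\le n$, and then \emph{any} minimal spanning $DD_2$-subgraph $H$ — no bespoke greedy needed — achieves $\frac{|E^*|}{|E_H|}\le n\cdot\frac{3}{2n}=1.5$ by Lemma \ref{minimal-DD-2size}. So the $1.5$ factor comes entirely from structural analysis of OPT under $3$-regularity, not from engineering a large feasible solution.
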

\begin{proof}
Given a 3-regular $DD_2$-graph $G=(V,E)$, let $H^*=(V, E^*)$ be a maximum size minimal spanning $DD_2$-subgraph of $G.$ From Proposition \ref{characterization-DD-2} it follows that the components of $H^*$ consist of minimal $DD_2$-graphs isomorphic to $K_{1,n_1}, 2 \leq n_1 \leq n,$ a cycle $C_4$ or a subdivision graph of a connected corona multigraph.
    
We claim that $|E^*|\leq n$, where $n$ is the number of vertices in the graph $G.$ It is easy to observe that the number of edges in $K_{1,n_1}$ component is $n_1\leq |V(K_{1,n_1})|, \forall n_1\geq 2,$ and the number of edges in $C_4$ component is $4\leq |V(C_4)|.$

Now it is left to show that the subdivision graph $G'=(V', E')$ of a connected corona multigraph has at most $|V'|$ number of edges. Since $G$ is a $3$-regular graph, it is easy to see that no two cycles intersect at any vertex $v$, otherwise $d(v)\geq 4.$ We claim that no connected component contains two distinct induced cycles. 
This claim will imply that $G'$ has at most one cycle and hence $|E'| \leq |V'|$. Now, we prove this claim.
On the contrary, let $C^1$ and $C^2$ be the two induced cycles in one connected component (say $C$) and let $P$ be the path connecting a vertex of $C^1$ to a vertex of $C^2.$ By Lemma \ref{characterization-subdivision}, $C$ is a bipartite graph $C = (A_C \cup B_C, E_C)$ such that $d_C(a)=2$ for every vertex $a\in A_C$ and every vertex $b\in B_C$ is a leaf or it is at a distance two from some leaf of $C.$ Thus the two endpoints of the path $P$ are in $B_C,$ (say $b_i,b_j\in B_C$). Since every path in $C$ alternate between a vertex from $A_C$ and $B_C,$ the vertices next to $b_i$ (or $b_j$) in path $P$ must be a vertex in $A_C$. Hence, the vertex $b_i$ (or $b_j$) is of degree at least 4, which is a contradiction. Thus, no connected component which is a subdivision graph of a corona multigraph contains two distinct induced cycles. 

Henceforth, summing the number of edges of all the components of $H^*$ is at most $n,$ which implies $|E^*| \leq n.$
     
Let $H=(V_H,E_H)$ be any minimal spanning $DD_2$-subgraph of $G.$ Thus, by Lemma \ref{minimal-DD-2size}, the size of any minimal spanning $DD_2$-graph is at least $\frac{2n}{3}$, so we have $|E_H|\geq \frac{2n}{3}.$ 
 
Hence, $$\frac{|E^*|}{|E_H|} \leq n\times \frac{3}{2n}=1.5.$$ Therefore, \textsc{Max-$DD_2$} is approximable within a factor of $1.5$ for $3$-regular graph.
\end{proof}

Next, we consider the inapproximability of \textsc{Min-to-$DD_2$}. However, it is known to be \textsc{NP}-complete \cite{miotk2020disjoint}.

\begin{proposition} \cite{miotk2020disjoint} \label{NP-C-gen}
 \textsc{Min-to-$DD_2$} is \textsc{NP}-complete for general graphs.   
\end{proposition}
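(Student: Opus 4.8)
The plan is to show membership in \textsc{NP} and then establish \textsc{NP}-hardness by reduction from a covering problem. For membership, I would take a candidate edge set $E'$ as the certificate and verify in polynomial time that $E\cap E'=\emptyset$ and that $(V,E\cup E')$ is a minimal spanning $DD_2$-graph. The $DD_2$-property is testable in polynomial time by Proposition \ref{charac-DD-2-weaksupport-certified}, i.e. by checking for each weak support vertex $s$ whether $N_G(s)\setminus(L_G\cup S_G)\neq\emptyset$; minimality is testable componentwise by Proposition \ref{characterization-DD-2}, since each component must be a star $K_{1,n_1}$, a $C_4$, or a subdivision of a corona multigraph, and the last case is recognized via the degree/distance conditions of Lemma \ref{characterization-subdivision}. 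As $|E'|\le\binom{n}{2}$, the certificate has polynomial size, so the problem lies in \textsc{NP}, and the substance is the hardness argument.

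For hardness I would reduce from a covering problem; Set Cover (equivalently, Dominating Set) is the natural source, since the later $O(\log n)$-approximation strongly suggests that \textsc{Min-to-$DD_2$} is covering-flavoured. Given a set system, I would build $G$ so that each element becomes an obstruction in the sense of Proposition \ref{charac-DD-2-weaksupport-certified}: a weak support vertex $s_i$ carrying a pendant (to make it a support) whose remaining neighborhood is placed entirely inside $L_G\cup S_G$, so that $N_G(s_i)\subseteq L_G\cup S_G$ and $G$ is non-$DD_2$. Each set $S_j$ would be realized by a vertex whose \emph{activation} consists of adding a single edge --- for instance severing a designated pendant so that some neighbor leaves $L_G\cup S_G$ --- in such a way that one activation repairs the obstruction at every element of $S_j$ at once. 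The intended correspondence is that an edge set $E'$ of size $k$ completing $G$ to a minimal $DD_2$-graph mirrors a cover of size roughly $k$, and conversely a cover of size $k$ yields such an $E'$, after which one argues that the two optima agree up to the bookkeeping built into the gadgets.

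The main obstacle, I expect, is the minimality requirement. It forces $(V,E\cup E')$ into the rigid form of Proposition \ref{characterization-DD-2}, and, by Lemma \ref{characterization-subdivision}, every support of a leaf must have degree exactly $2$ unless it is the center of a star all of whose neighbors are leaves. This rigidity is in direct tension with the shared, high-degree ``cover vertex'' that a Set Cover gadget wants, since an element that may be repaired by several sets needs degree larger than $2$ while still carrying a pendant. Reconciling these would require careful gadget engineering: the added edges must be routed so as not to create forbidden substructures (odd cycles, two cycles inside one component, or high-degree leaf-supports outside stars), which I would handle by attaching auxiliary ``collector'' components that absorb the added edges into legitimate subdivision-of-corona or star pieces. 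One must also pad each element gadget so that the cheap local repair --- simply severing an element's own pendant, which fixes one obstruction at the cost of one edge --- is never cheaper than routing the repair through a set, so that it cannot undercut the intended cover cost. Carrying out both directions of the correspondence while preserving minimality at every step is where I anticipate essentially all of the technical difficulty to lie.
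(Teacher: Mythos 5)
There is a genuine gap, and it stems from a misreading of the problem rather than from the reduction machinery itself. First, for context: the paper does not prove Proposition~\ref{NP-C-gen} at all --- it is quoted from \cite{miotk2020disjoint}, with the remark that the proof there is a cost-preserving reduction from \textsc{Set Cover}. So your choice of source problem, and your mechanism of realizing elements as weak support vertices that violate condition~2 of Proposition~\ref{charac-DD-2-weaksupport-certified}, match the approach the paper attributes to the original authors; your \textsc{NP}-membership argument is also sound in outline. The problem is what you require of $(V, E\cup E')$. Despite the phrase ``minimal spanning $DD_2$-graph'' in the problem list, the intended meaning --- unambiguous from the abstract (``adding minimum number of edges to a graph $G$ to make it a $DD_2$-graph''), from the feasibility argument for Algorithm~\ref{Apx-Min_to_dd_2} (where the output $\hat{E}$ is a matching on pendant vertices and the graph $(V,E\cup\hat{E})$ retains all of $G$'s structure), and from the proof of Claim~\ref{Claim-add} (``a minimal \emph{set of edges} added\ldots such that $H$ is a $DD_2$-graph'') --- is that $E'$ is an (inclusion-)minimal edge set whose addition makes the graph $DD_2$; it is \emph{not} required that $(V,E\cup E')$ be a minimal $DD_2$-graph in the sense of Proposition~\ref{characterization-DD-2}. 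For the minimum-size decision version the inclusion-minimality of $E'$ is automatic, so the verifier only needs to check $E\cap E'=\emptyset$ and run the polynomial $DD_2$-recognition via Proposition~\ref{charac-DD-2-weaksupport-certified}.

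Under your literal reading the reduction plan cannot work, and not merely for the reasons you flag. Since edges of $G$ are never deleted, every structural defect of $G$ persists in $E\cup E'$; but every component of a minimal $DD_2$-graph is triangle-free (stars and $C_4$ trivially, subdivisions of corona multigraphs because they are bipartite by Lemma~\ref{characterization-subdivision}), so any non-$DD_2$ input containing a triangle --- e.g., a triangle with a pendant attached to each vertex --- admits \emph{no} feasible solution whatsoever. Hence the ``main obstacle'' you devote most of the proposal to (the degree-$2$ rigidity of leaf-supports, collector components absorbing added edges into legitimate star or corona-subdivision pieces, padding against cheap local repairs) is engineering against a constraint that is not part of the problem, and even granting that reading, you never actually exhibit gadgets completing either direction of the correspondence. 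Under the correct reading the hardness proof is considerably lighter: the only structure one must control is the weak-support obstruction and its three repair modes, spelled out later in the paper --- $(a)$ give the weak support a neighbor outside $L_G\cup S_G$, $(b)$ add an edge to its unique pendant, $(c)$ add edges to all pendants of an adjacent strong support --- with sets encoded by strong support vertices, elements by weak support obstructions adjacent to them, and the cost-preserving accounting (which the paper needs for its $(1-\varepsilon)\log|V|$ corollary) arranged through the pendant counts. In short: right source problem and right obstruction mechanism, but the proposal as written attacks a different, largely infeasible problem and leaves the actual reduction unconstructed.
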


% \begin{theorem}\cite{alon2006algorithmic}
% \label{setcover_inapproximability}
% \textsc{Set Cover} can not be approximated within $(1-\varepsilon)\ln n,$ for any $\varepsilon>0$, unless \textsc{P=NP}.
% \end{theorem}

The authors in \cite{miotk2020disjoint} proved the above proposition by establishing a cost-preserving reduction from \textsc{Set Cover}.  By using the lower bound result of \textsc{Set Cover} \cite{alon2006algorithmic}, we can have the following lower bound result for \textsc{Min-to-$DD_2$}. %\todo[]{Define SET COVER.}

\begin{corollary}
For any graph $G=(V,E)$, \textsc{Min-to-$DD_2$} can not be approximated within $(1-\varepsilon)\log |V|$ for any $\varepsilon >0,$ unless \textsc{P=NP}.
\end{corollary}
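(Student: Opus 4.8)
The plan is to derive this inapproximability lower bound as an immediate consequence of the cost-preserving reduction from \textsc{Set Cover} that underlies Proposition \ref{NP-C-gen}, combined with the known hardness of approximating \textsc{Set Cover}. First I would recall the reduction of \cite{miotk2020disjoint}: from a \textsc{Set Cover} instance $(\mathcal{U}, \mathcal{S})$ with universe $\mathcal{U}$ and family $\mathcal{S}$, one constructs a non-$DD_2$-graph $G$ whose vertex set encodes the elements and the sets, together with gadget vertices forcing the $DD_2$-property to fail precisely at certain weak support vertices (via Proposition \ref{charac-DD-2-weaksupport-certified}). The key structural feature to verify is that the number of edges one must add to make $G$ a $DD_2$-graph equals (or is within an additive/multiplicative constant of) the size of a minimum set cover, so that an optimal \textsc{Min-to-$DD_2$} solution of cost $k$ corresponds to a set cover of size $k$ and vice versa.

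Next I would invoke the Alon--Moshkovitz--Safra lower bound \cite{alon2006algorithmic}, which states that \textsc{Set Cover} cannot be approximated within a factor of $(1-\varepsilon)\ln m$ (where $m = |\mathcal{U}|$ is the universe size) for any $\varepsilon > 0$ unless \textsc{P}=\textsc{NP}. The transfer step is then routine: because the reduction preserves cost exactly, any polynomial-time $\alpha$-approximation for \textsc{Min-to-$DD_2$} would yield a polynomial-time $\alpha$-approximation for \textsc{Set Cover}. The remaining bookkeeping is to relate the parameter $m$ of the \textsc{Set Cover} instance to the order $|V|$ of the constructed graph $G$. Since the gadget adds only a polynomial (indeed linear or near-linear) number of vertices per element and per set, we have $|V| = \mathrm{poly}(m, |\mathcal{S}|)$, and in the hard instances of \cite{alon2006algorithmic} one may take $|\mathcal{S}|$ polynomially bounded in $m$, so that $\log |V| = \Theta(\log m)$; absorbing the constant factor into $\varepsilon$ then gives the claimed $(1-\varepsilon)\log|V|$ bound.

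Concretely, the proof is short: suppose for contradiction that \textsc{Min-to-$DD_2$} admits a polynomial-time approximation algorithm achieving ratio $(1-\varepsilon)\log|V|$ for some fixed $\varepsilon > 0$. Feeding it the graph $G$ produced by the reduction from an arbitrary \textsc{Set Cover} instance, and reading off the corresponding set cover from the returned edge set, we obtain a set cover whose size is within a factor of $(1-\varepsilon)\log|V| = (1-\varepsilon')\ln m$ of optimal for a suitable $\varepsilon' > 0$ (after adjusting constants). This contradicts \cite{alon2006algorithmic} unless \textsc{P}=\textsc{NP}, establishing the corollary.

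I expect the main obstacle to be not the logical transfer itself but the careful accounting in the size relation $\log|V| = \Theta(\log m)$: one must confirm that the reduction does not blow up the vertex count super-polynomially and that the additive constants introduced by the gadget vertices (which do not correspond to genuine set-cover choices) can be absorbed without degrading the logarithmic factor. Since the reduction in \cite{miotk2020disjoint} is explicitly cost-preserving and polynomial, this accounting goes through, and the corollary follows directly; indeed the statement is framed as a corollary precisely because all the substantive work lies in the already-cited reduction and the already-cited \textsc{Set Cover} inapproximability result.
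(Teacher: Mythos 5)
Your proposal is correct and takes essentially the same route as the paper, whose entire justification for this corollary is the two-sentence remark preceding it: invoke the cost-preserving \textsc{Set Cover} reduction of \cite{miotk2020disjoint} behind Proposition~\ref{NP-C-gen} together with the lower bound of \cite{alon2006algorithmic}, exactly as you do, with your write-up merely filling in the routine transfer and size bookkeeping the paper leaves implicit. One point to phrase more carefully: $\log|V| = \Theta(\log m)$ by itself does not let you ``absorb the constant into $\varepsilon$'' (a multiplicative constant $d>1$ in $\log|V| = d\log m$ would degrade the ratio, not just the $\varepsilon$), so you should note that the reduction adds only constantly many vertices per element and per set and that in the hard instances the universe size dominates the number of sets, giving $\log|V| = (1+o(1))\log m$, after which the absorption is legitimate.
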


In order to design an $O(\log n)$-factor approximation algorithm for 
\textsc{Min-to-$DD_2$}, we need similar approximation algorithms for 
\textsc{Min-W-Dom-Set} and \textsc{Min-W-T-Dom-Set}. Given a graph $G$ and a vertex weight function $w: V(G)\rightarrow \mathbb{N}$, in minimum weighted dominating set problem (\textsc{Min-W-Dom-Set}) it is required to find a dominating set $D\subseteq V$ such that $\sum_{v\in D} w(v)$ is minimum.  
Given a graph $G=(V, E)$ and a vertex set $T \subseteq V$, a vertex set $D$ is called a $T$-dominating set if $D \cap N[v] \neq \emptyset$, for each vertex $v \in T$.
Given a graph $G,$ a vertex weight function $w$, and a set $T\subseteq V$, in minimum weighted $T$ dominating set problem (\textsc{Min-W-T-Dom-Set}) it is required to find a $T$-dominating set $D\subseteq V$ of minimum weight.
%such that $D \cap N[v] \neq \emptyset$, for all $v\in T,$ and $w(D)$ is minimum.

\begin{theorem} \cite{vazirani2001approximation} \label{min_weight_dom_set_approx}
 \textsc{Min-W-Dom-Set} is approximable within a factor of $O(\log n).$
\end{theorem}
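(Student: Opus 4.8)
The plan is to prove that \textsc{Min-W-Dom-Set} admits an $O(\log n)$-approximation by reducing it to the weighted \textsc{Set Cover} problem, for which the greedy algorithm is known to achieve a $H(n) = O(\log n)$ approximation ratio. First I would set up the reduction: given a graph $G=(V,E)$ with vertex weights $w\colon V\to\mathbb{N}$, I construct a \textsc{Set Cover} instance whose ground universe is the vertex set $V$ (the elements to be covered), and whose collection of sets is $\{S_v : v\in V\}$ where $S_v = N_G[v]$ is the closed neighborhood of $v$, with the cost of choosing $S_v$ set equal to $w(v)$. The key observation is that a subset $D\subseteq V$ is a dominating set of $G$ if and only if the collection $\{S_v : v\in D\}$ covers the universe $V$, because every vertex $u\in V$ lies in $N_G[v]$ for some $v\in D$ precisely when $u$ is either in $D$ or adjacent to a vertex of $D$. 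Since the costs are preserved under this bijection between dominating sets and set covers, optimal solutions correspond to one another and their weights coincide.

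Next I would invoke the standard weighted greedy \textsc{Set Cover} algorithm, which repeatedly selects the set minimizing the ratio of its cost to the number of newly covered elements; this runs in polynomial time and returns a cover of cost at most $H(|V|)\cdot \mathrm{OPT}$, where $H(m)=\sum_{i=1}^{m}\frac{1}{i}\le 1+\ln m$ is the $m$-th harmonic number. Because the ground set has $|V|=n$ elements and there are $n$ candidate sets, every quantity in the construction is polynomially bounded, so translating the returned cover back into a dominating set $D$ of $G$ via the bijection yields a dominating set whose weight is at most $H(n)\cdot w(D^{*}) = O(\log n)\cdot w(D^{*})$, where $D^{*}$ is a minimum-weight dominating set. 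This establishes the claimed $O(\log n)$ approximation ratio.

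The main obstacle is not conceptual but lies in being careful about the correspondence at the boundary of the reduction: one must verify that the greedy \textsc{Set Cover} guarantee, usually stated for the cardinality of the universe, gives exactly $H(n)$ and not a worse bound, and that no vertex is left uncovered when weights or isolated-vertex issues intervene. Since the paper's standing assumption is that $G$ has no isolated vertices, every closed neighborhood $N_G[v]$ is nonempty and in fact $v\in S_v$, so the universe is always coverable and the greedy procedure terminates with a feasible cover; thus the potential degeneracy does not arise. I would therefore present the reduction, state the equivalence between dominating sets and set covers as the crucial lemma, and then cite the classical greedy \textsc{Set Cover} bound to conclude. Because this is a direct and well-known reduction, the argument is short, and the heart of the proof is simply verifying the dominating-set/set-cover equivalence together with cost preservation.
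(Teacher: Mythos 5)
Your proposal is correct and is essentially the proof the paper implicitly relies on: the theorem is cited from Vazirani's book, where \textsc{Min-W-Dom-Set} is handled exactly via the closed-neighborhood reduction to weighted \textsc{Set Cover} and the greedy $H(n)$-approximation. The reduction, the cost-preserving correspondence between dominating sets and covers, and the $H(n)\le 1+\ln n$ bound are all stated accurately, so nothing further is needed.
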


Next, we design an $O(\log n)$-factor approximation algorithm for \textsc{Min-W-T-Dom-Set} and it will be used to design a similar approximation algorithm for \textsc{Min-to-$DD_2$}.

\begin{theorem} \label{approx_T-dom}
 \textsc{Min-W-T-Dom-Set} can be approximated with an approximation ratio of $O(\log n).$
\end{theorem}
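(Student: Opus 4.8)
The plan is to reduce \textsc{Min-W-T-Dom-Set} to \textsc{Min-W-Dom-Set} in an approximation-preserving manner and then invoke Theorem~\ref{min_weight_dom_set_approx}. The main obstacle is conceptual: in $T$-domination only the vertices of $T$ must be covered, whereas ordinary domination insists that \emph{every} vertex be covered. So the crux is to design a gadget that neutralizes the covering requirement on $V\setminus T$ (at zero cost) while keeping all of $V$ available as potential dominators and leaving the constraints on $T$ completely unchanged.

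First I would build an auxiliary graph $G'=(V',E')$ from $G=(V,E)$ as follows: retain all vertices and edges of $G$, and for every vertex $u\in V\setminus T$ attach a new pendant vertex $u^*$ joined only to $u$. Extend the weight function by keeping $w(u)$ for $u\in V$ and setting $w(u^*)=0$. The key structural observation is that for each $v\in T$ we have $N_{G'}[v]=N_G[v]$, since the new pendants are adjacent only to vertices of $V\setminus T$; hence dominating $v$ in $G'$ is exactly the constraint of $T$-dominating $v$ in $G$. Meanwhile every $u\in V\setminus T$ and every pendant $u^*$ can be dominated for free by placing $u^*$ in the solution.

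Next I would establish that the two optima coincide, $\mathrm{OPT}(G')=\mathrm{OPT}_{T\text{-}\mathrm{dom}}(G)$. For one direction, from any $T$-dominating set $D$ of $G$ the set $D\cup\{u^*:u\in V\setminus T\}$ dominates $G'$ and has the same weight, since the added pendants are free. For the reverse direction, given any dominating set $D'$ of $G'$, the restriction $D:=D'\cap V$ is a $T$-dominating set of $G$ (because $N_{G'}[v]=N_G[v]\subseteq V$ for $v\in T$) whose weight is at most $w(D')$, as dropping the weight-$0$ pendants cannot increase the cost. These two inequalities give the claimed equality.

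Finally I would run the $O(\log n)$-approximation of Theorem~\ref{min_weight_dom_set_approx} on $(G',w)$; since $|V'|\le 2|V|$, its ratio is still $O(\log n)$. Letting $D'$ be the returned dominating set and outputting $D=D'\cap V$, the chain $w(D)\le w(D')\le O(\log n)\cdot\mathrm{OPT}(G')=O(\log n)\cdot\mathrm{OPT}_{T\text{-}\mathrm{dom}}(G)$ yields the desired bound. The only point needing care is the use of weight-$0$ vertices: if the cited algorithm assumes strictly positive weights, I would instead place all pendants $u^*$ into the solution a priori (they cost nothing and dominate $V\setminus T$) and apply the dominating-set approximation only to the residual instance, whose still-uncovered vertices are exactly those of $T$. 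Equivalently, the whole argument can be phrased as weighted \textsc{Set Cover} with universe $T$ and a set $N[v]\cap T$ of cost $w(v)$ for each $v\in V$, for which greedy already gives an $H(|T|)=O(\log n)$ ratio; the reduction above is merely its graph-theoretic incarnation.
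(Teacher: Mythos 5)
Your proof is correct, and while it follows the same high-level route as the paper (an approximation-preserving reduction to \textsc{Min-W-Dom-Set}, then invoking Theorem~\ref{min_weight_dom_set_approx}), your gadget is genuinely different. The paper adds a \emph{single} apex vertex $t$ of weight $1$ joined to all of $V\setminus T$, plus two heavy vertices $p,q$ of weight $|V|$ whose only role is to force $t$ into any near-optimal dominating set; this keeps all weights strictly positive but introduces an additive shift $w'(D')=w(D)+1$, which the paper then absorbs via the inequality $\frac{w(D)}{w(D_{opt})}\leq 2\,\frac{w(D)+1}{w(D_{opt})+1}$, costing a factor $2$ and requiring a ``wlog the solution contains $t$'' argument. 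Your per-vertex zero-weight pendants $u^*$ instead give \emph{exact} equality $\mathrm{OPT}(G')=\mathrm{OPT}_{T\text{-}\mathrm{dom}}(G)$, so the ratio transfers with no slack, no minimality discussion, and no special vertex to argue about --- a cleaner analysis. The price is the zero weights: since the paper's weight functions map into $\mathbb{N}$ and greedy-type algorithms compute cost-effectiveness ratios, weight-$0$ vertices are a legitimate concern, and you correctly anticipated it with the fallback of pre-placing the pendants and solving the residual instance. Your closing observation is in fact the sharpest version of the whole argument: \textsc{Min-W-T-Dom-Set} \emph{is} weighted \textsc{Set Cover} with universe $T$ and sets $N[v]\cap T$ at cost $w(v)$, so greedy directly yields $H(|T|)=O(\log n)$ with no graph gadget at all --- a more elementary and self-contained route than either reduction, which the paper does not mention.
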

\begin{proof}
We prove it by establishing a reduction to \textsc{Min-W-Dom-Set}.  Given an instance $(G, w, T)$ of \textsc{Min-W-T-Dom-Set}, we construct an instance $(G', w')$ of \textsc{Min-W-Dom-Set} as follows. The graph $G'=(V', E')$ is obtained from $G$ by introducing three new vertices $t, p, q$ and the edges $\{(t, p), (t, q)\}\cup \{(t, v) \mid v \in V\setminus T\}$. Next, we define $w'(v)=w(v)$, for all $v \in V$, $w'(t)=1$, $w'(p)=w'(q)=|V|$.

Let $D$ be a $T$-dominating set in $G$. Then it is easy to observe that $D'=D \cup \{t\}$ is a dominating set in $G'$. Also, $w'(D') = w(D) +1$.

From the construction of $G'$, it is easy to observe that every minimum weight dominating set in $G'$ must contain the vertex $t$. Otherwise, the vertices $p$ and $q$ must be in the dominating set $D$, then the set $(D\cup\{t\}) \setminus \{p, q\}$ is a dominating set in $G'$ with strictly smaller weight than $D$. Because of this property of a minimum weight dominating set in $G'$, without loss of generality, we will consider the minimal dominating sets in $G'$ that contain the vertex $t$. Now, it is easy to observe that if $D'$ is a minimal dominating set in $G'$ then $D = D'\setminus \{t\}$ is a minimal $T$-dominating set in $G$. Thus $D$ is a dominating set in $G$ as $t$ is not adjacent to any vertex in $T$ and $(D'\setminus \{t\}) \cap T \neq \emptyset$. Also, $w(D) = w'(D')-1$.

From the above discussion, it follows that $w'(D'_{opt}) = w(D_{opt}) +1$, where $D'_{opt}$ is a minimum weight dominating set of $G'$ and $D_{opt}$ is a minimum weight $T$-dominating set of $G$. 

Now, for any minimal dominating set $D'$ of $G'$, we have $\frac{w(D)}{w(D_{opt})} \leq 2 \frac{w(D)+1}{w(D_{opt})+1} = 2 \frac{w'(D')}{w'(D'_{opt})} \leq O(\log |V'|)= O(\log |V|).$
\end{proof}

\begin{theorem}
 \textsc{Min-to-$DD_2$} can be approximated within an approximation ratio of $O(\log n)$. 
\end{theorem}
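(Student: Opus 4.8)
The plan is to reduce \textsc{Min-to-$DD_2$} to \textsc{Min-W-T-Dom-Set}, exploiting Proposition \ref{charac-DD-2-weaksupport-certified}, which says that a graph fails to be a $DD_2$-graph precisely because of its weak support vertices: $G$ is a $DD_2$-graph if and only if every weak support vertex $s$ has a neighbor outside $L_G \cup S_G$. So the obstruction to the $DD_2$-property is exactly the set of \emph{bad} weak support vertices $s$ whose entire neighborhood lies inside $L_G \cup S_G$. To fix such an $s$, it suffices to add an edge from $s$ to some vertex $v \notin L_G \cup S_G$; this gives $s$ a neighbor outside $L_G \cup S_G$ and (after checking that $v$ itself does not become a new weak support vertex, which it cannot since $v$ already had degree $\geq 2$ or we choose $v$ of high degree) restores the characterizing condition. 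Thus the number of edges we must add is governed by how few ``fixing'' vertices $v$ we can pick so that every bad weak support vertex is adjacent to at least one chosen $v$ --- this is exactly a domination-type covering requirement on the set $T$ of bad weak support vertices.

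First I would let $T$ denote the set of weak support vertices $s$ with $N_G(s) \subseteq L_G \cup S_G$ (the vertices witnessing that $G$ is not a $DD_2$-graph), and let $U = V \setminus (L_G \cup S_G)$ be the set of admissible endpoints for new edges. I would then build an instance of \textsc{Min-W-T-Dom-Set} whose job is to select a minimum number of vertices in $U$ that together cover $T$ in the sense that every $s \in T$ is adjacent (in the graph augmented with candidate edges) to a chosen vertex. Concretely, one adds for each $s \in T$ and each candidate $v \in U$ the potential edge $sv$; choosing $v$ into the dominating set corresponds to adding all edges from $v$ to the bad weak support vertices it neighbors. Assigning unit weights and applying Theorem \ref{approx_T-dom}, I obtain a $T$-dominating set $D \subseteq U$ of size at most $O(\log n)$ times the optimum cover size, and the set $E'$ of added edges is at most (number of bad vertices) times $|D|$, but more carefully one adds exactly one edge per bad weak support vertex to its assigned dominator, so $|E'| \leq$ a constant multiple of the optimum.

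The key steps in order would be: (1) use Proposition \ref{charac-DD-2-weaksupport-certified} to argue that any edge set $E'$ making $G$ a $DD_2$-graph must, for each bad weak support vertex $s \in T$, supply at least one new neighbor of $s$ lying outside $L_G \cup S_G$, thereby lower-bounding the optimum $|E'^*|$ by the minimum number of such fixing edges; (2) show conversely that a $T$-dominating set $D \subseteq U$ yields a valid augmentation by adding, for each $s \in T$, a single edge to a dominator in $D$, and verify that the resulting graph is indeed a $DD_2$-graph (no new weak support vertices are created, since every endpoint in $U$ already had degree at least two) and can be pruned to be minimal spanning without increasing the edge count asymptotically; (3) establish the cost correspondence $|E'| = \Theta(|D|)$ between the augmentation and the covering instance so the $O(\log n)$ ratio transfers; and (4) invoke Theorem \ref{approx_T-dom} to solve the covering instance within $O(\log n)$.

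The main obstacle I expect is in step (2): proving that the augmentation produced from the $T$-dominating set is genuinely a \emph{minimal spanning} $DD_2$-graph (as the problem demands), and not merely a $DD_2$-graph, while controlling the number of added edges. Adding edges can create new support structure or spoil minimality, so one must argue that after adding the covering edges one can further delete redundant edges to reach a minimal spanning $DD_2$-subgraph whose added-edge count is still within an $O(\log n)$ factor of optimal --- or alternatively redefine the reduction so that each bad weak support vertex receives exactly one new edge, making the cost of the solution equal to $|T|$-many edges distributed among the chosen dominators, and showing the optimum augmentation cannot do asymptotically better. Handling the interaction between the minimality constraint and the covering cost is the delicate point; the rest is a fairly mechanical transfer of the approximation guarantee from \textsc{Min-W-T-Dom-Set} via Theorem \ref{approx_T-dom}.
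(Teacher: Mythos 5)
There is a genuine gap, and it lies exactly where you place your steps (1) and (3): your lower bound on the optimum is false, and with it the cost correspondence $|E'| = \Theta(|D|)$ collapses. You claim that any augmenting set must, \emph{for each} bad weak support vertex $s \in T$, supply a new neighbor of $s$ outside $L_G \cup S_G$, so that the optimum is bounded below by the number of fixing edges distributed among your dominators. But Proposition \ref{charac-DD-2-weaksupport-certified} can also be satisfied for $s$ by edges \emph{not incident to $s$ at all}: one may add an edge to the unique pendant of $s$ (so $s$ ceases to be a weak support vertex and the condition becomes vacuous for it), or add edges to all pendants of a strong support neighbor $z$ of $s$ (so $z$ leaves $S_G$ and the \emph{existing} edge $sz$ certifies the condition). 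A single such edge can repair many bad vertices simultaneously. Concretely, take a strong support vertex $z$ with two pendants $p_1, p_2$ and $k$ weak support neighbors $s_1, \ldots, s_k$, each $s_i$ having its own pendant $\ell_i$. The single edge $p_1p_2$ makes the graph a $DD_2$-graph, so the optimum is $1$, while your scheme adds one edge per $s_i$ (from $s_i$ to some vertex of $U = V \setminus (L_G \cup S_G)$), costing $k$ --- an unbounded ratio, far worse than $O(\log n)$. Worse still, in this example $U = \emptyset$ (every vertex is a pendant or a support vertex), so your reduction has no candidate dominators even though the instance is easily solvable; the problem permits new edges between pendants, which your admissible-endpoint set excludes. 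A further structural symptom: because you attach a \emph{potential} edge $sv$ for every pair $s \in T$, $v \in U$, any single vertex of $U$ is a $T$-dominating set, so your covering instance is trivial and its objective (number of dominators, unit weights) is decoupled from the true cost, which is the number of added edges, namely $|T|$.

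The paper's proof repairs exactly this mismatch. It builds the \textsc{Min-W-T-Dom-Set} instance on the \emph{existing} adjacencies: $A$ is the set of bad weak support vertices, $B = N(A) \cap S''_G$ the strong support vertices adjacent to them, $H$ keeps the edges between them, $T = A$, and --- crucially --- the weight of a vertex is its number of pendant neighbors in $G$, because the repair consists of raising the degree of those pendants. The returned edge set is a near-perfect matching on the pendants $K$ of the chosen dominating set $S$, so each added edge pays for two pendants and $|\hat{E}| \approx w(S)/2$; symmetrically $\lfloor w(S^*)/2 \rfloor \leq |\hat{E}^*| \leq \lceil w(S^*)/2 \rceil$ for the optimum, which is what lets the $O(\log n)$ ratio of Theorem \ref{approx_T-dom} transfer. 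In short, the correct reduction encodes repair strategies (a) and (c) of the characterization --- self-repair via one's own pendant, or neutralizing a neighboring strong support vertex --- with weights measuring pendant count, rather than strategy (b) with unit weights; your formulation optimizes the wrong quantity and has no valid lower bound to compare against. (Your worry in step (2) about minimal spanning subgraphs is a side issue: the paper's algorithm, like the problem as treated there, only needs $(V, E \cup \hat{E})$ to be a $DD_2$-graph via Proposition \ref{charac-DD-2-weaksupport-certified}.)
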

\begin{proof}
Given a non-$DD_2$-graph $G=(V, E),$ let $S_G=S'_G\cup S''_G$ be the set of all support vertices of $G$, where $S'_G$ and $S''_G$ be the set of weak support vertices and strong support vertices of $G,$ respectively. Let $L_G=L'_G\cup L''_G$ be the set of all pendant vertices of $G,$ where $L'_G$  and $L''_G$ be the set of pendants adjacent to weak support vertices and strong support vertices of $G,$ respectively. 
Based on these notations we design Algorithm \ref{Apx-Min_to_dd_2} for \textsc{Min-to-$DD_2$}.

% The pre-processing of the graph $G$ is done in the following way.
% \begin{itemize}
%     \item Compute the set $U=\{v\in S'_G \mid N(v)\setminus (L_G\cup S_G)\neq \emptyset\}.$
%     \item Define the sets $A=S'(G) \setminus U$. % and $B=S''(G).$
%     \item We further define the sets $A'=\{v\in A \mid N(v)\cap S''_G\neq \emptyset\}$ and $B'=N(A')\cap S''(G).$
%     \item Construct the graph $G'=(A' \cup B', E')$ with the edge set $E'$ as the set of edges in $G[A' \cup B']$ minus the edges in $G[B']$.
%     %$E'=\{E(G[A'\cup B']) \setminus \{uv \mid u,v\in B'\}\}.$
%     \item Define the vertex weight function on $G'$ as $w(v)$ as the number of pendant neighbors of $v$ in $G'$. 
%     %Label the vertices of $A'$ and $B'$ ($f:A'\cup B'\rightarrow \mathbb{N}$) as $f(v)=t_v,$ where $t_v$ is the number of pendants vertices adjacent to $v.$ 
% \end{itemize}

\begin{algorithm2e}[]
%\SetAlgoLined
\textbf{Input:} A non-$DD_2$-graph $G=(V,E).$\\
\textbf{Output:} A solution $\hat{E}$ to \textsc{Min-to-$DD_2$} for $G.$\\
\Begin{
 Compute the following sets\\
  $U=\{v\in S'_G \mid N(v)\setminus (L_G\cup S_G)\neq \emptyset\}$; \\
 $A=S'(G) \setminus U$; \\
 %$A'=\{v\in A \mid N(v)\cap S''_G\neq \emptyset\}$;\\
 $B=N(A)\cap S''(G)$;\\
 Construct the graph $H=(A \cup B, E')$ with the edge set $E'$ as the set of edges in $G[A \cup B]$ minus the edges in $G[B]$;\\
 Define the vertex weight function on $H$ as $w(v)$ as the number of pendant neighbors of $v$ in $G$;\\
 Compute a weighted $T$-dominating set $S$ of $H$ where $T=A$;\\
 Let $K$ be the set of pendant neighbors of $S$ in $G$;\\  
 %Let $K'$ be the set of pendant neighbors of $(A \setminus A')$ in $G$;\\
 $\hat{E}$ be the minimum size edge set $uv$  such that $u,v\in K$ and degree of each vertex $v \in K$ is at least 2 in $(V, E \cup \hat{E})$;\\
\Return $\hat{E}$;} \caption{\textsc{Approx-Min-to-$DD_2$}}
 \label{Apx-Min_to_dd_2}
\end{algorithm2e}

The correctness of this algorithm is based on the characterization described in Proposition \ref{charac-DD-2-weaksupport-certified}. This characterization of $DD_2$-graphs suggests that if $N(v) \cap (L_G \cup S_G) = \emptyset$, for a vertex $v \in S'_G$, then we can make the set $N(v) \cap (L_G \cup S_G)$ non-empty by introducing new edges in the following ways: $(a)$ add an edge to the unique pendant neighbor of $v$, $(b)$ add a new edge such that $v$ has a new neighbor $z$ which is not in $L_G \cup S_G$, or $(c)$ add new edges to all the pendant neighbors of a vertex $z$ in $N(v) \cap S''_G$. 
The main idea of this algorithm is based on adding new edges so that $(a)$ or $(c)$ satisfies. It is easy to observe that if we are adding an edge to a pendant neighbor of $v \in S'_G$, then $N(u) \cap (L_G \cup S_G)$ becomes nonempty for all the neighbors $u \in N(v) \cap S'_G$. This observation suggests constructing the vertex-weighted graph $H$ and looking for a $T$-dominating set in it. Now, it is easy to observe that $\hat{E}$ is a feasible solution to \textsc{Min-to-$DD_2$} for $G$. In fact, $\hat{E}$ is a matching on the vertex set $K$ if $|K|$ is  even otherwise, $\hat{E}$ is a maximum matching on $K$ along with one extra edge. Optimality of $\hat{E}$ follows as $\hat{E}$ is a maximum matching on the vertex set $K$.

It can be observed that $\lfloor \frac{w(S)}{2} \rfloor \leq |\hat{E}| \leq \lceil \frac{w(S)}{2} \rceil$. Also, if $S^*$ is a minimum weight $T$-dominating set in $(H, w, A)$ and $\hat{E}^*$ is a minimum size solution to \textsc{Min-to-$DD_2$} for $G$ then $\lfloor \frac{w(S^*)}{2} \rfloor \leq |\hat{E}^*| \leq \lceil \frac{w(S^*)}{2} \rceil$. Hence, we have the following inequality, $\frac{|\hat{E}|}{|\hat{E}^*|} \leq \frac{w(S)}{w(S^*)} \leq O(\log |V|)$.
\end{proof}

\section{Complexity on bounded degree graphs}
In this section, we  prove that \textsc{Min-$DD_2$} and \textsc{Max-$DD_2$} are \textsc{APX}-hard for graphs with maximum degree 4.  We prove these results by establishing $L$-reductions from \textsc{Min-VC} and \textsc{Max-IS} for 3-regular graphs, respectively.
$L$-reduction is defined as follows.

\begin{definition} \cite{papaYa}
 Given two NP optimization problems $\pi_1$ and $\pi_2$ and a polynomial time transformation $f$ from instances of $\pi_1$ to instances of $\pi_2,$ we say that $f$ is an $L$-reduction if there are positive constants $\alpha$ and $\beta$ such that for every instance $x$ of $\pi_1:$
 \begin{itemize}
     \item $opt_{\pi_2}(f(x))\leq \alpha. opt_{\pi_1}(x).$
     \item for every feasible solution $y$ of $f(x)$ with objective value $m_{\pi_2}(f(x),y)$, we can find a solution $y'$ of $x$ in polynomial time with $m_{\pi_1}(x,y')$ such that $|opt_{\pi_1}(x)-m_{\pi_1}(x,y')|\leq \beta. |opt_{\pi_2}(f(x))-m_{\pi_2}(f(x),y)|.$
 \end{itemize} 
\end{definition}

\begin{theorem} \label{MinDD_2APXhard}
\textsc{Min-$DD_2$} is \textsc{APX}-hard for graphs with maximum degree $4$.
\end{theorem}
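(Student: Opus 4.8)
The plan is to construct an $L$-reduction from \textsc{Min-VC} on cubic ($3$-regular) graphs, which is known to be \textsc{APX}-hard and for which every instance $G=(V,E)$ with $|V|=n$ and $|E|=m=\tfrac{3n}{2}$ satisfies the linear lower bound $\tau(G)\ge m/3 = n/2$; this linearity is exactly what the two $L$-reduction inequalities will consume. Given a cubic $G$, I would build $G'$ with $\Delta(G')=4$ as follows: subdivide every edge $e=uv$ by a new vertex $x_e$ (so $x_e$ has degree $2$ and the original vertices keep degree $3$), and attach to each original vertex $v$ a small constant-size gadget $Q_v$ joined to $v$ by a single edge, raising $d_{G'}(v)$ to $4$. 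The gadget $Q_v$ is designed to act as a \emph{private dominator} of $v$: it contains a forced dominating-part vertex $g_v$ (made so by hanging two leaves on $g_v$, which by the leaf/support analysis underlying Proposition \ref{charac-DD-2-weaksupport-certified} must lie in the dominating set $A$ while the leaves lie in the $2$-dominating set $B$), and $g_v$ is adjacent to $v$. Choosing $g_v$ of degree $3$ (two leaves plus $v$) guarantees $g_v$ already has its two $B$-neighbours from the leaves, so that $v$ is free to be placed in either $A$ or $B$ without violating the $2$-domination of $g_v$; one then checks directly that $G'$ has no isolated vertex and satisfies the weak-support condition of Proposition \ref{charac-DD-2-weaksupport-certified}, hence is a $DD_2$-graph.

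The intended semantics is that in a minimal spanning $DD_2$-subgraph $H$ of $G'$, the dominating part $A$ restricted to the original vertices encodes a vertex cover $S=A\cap V$ of $G$, with the subdivision vertices acting as edge-checks: if $e=uv$ is \emph{covered} (some endpoint in $A$) then $x_e$ can be placed in $B$ at no extra cost, whereas if $e$ is \emph{uncovered} (both endpoints in $B$) then $x_e$ is forced into $A$ and pays one unit. Using Proposition \ref{characterization-DD-2} together with the identity $|E(H)|=2|A|$ for subdivision-type components, I would show that the canonical subgraph associated with a vertex cover $S$ has exactly $|E(H)|=2|S|+cn$ edges, where $cn$ is the fixed contribution of the $n$ gadgets $Q_v$ (each $g_v$ contributing its two leaf-edges plus the edges needed to dominate the $B$-originals), with any additive constant absorbed into $c$. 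Consequently, writing $\mathrm{opt}(G')$ for the optimum of \textsc{Min-}$DD_2$ on $G'$, one gets $\mathrm{opt}(G')=2\,\tau(G)+cn$; since $n\le 2\tau(G)$ this yields $\mathrm{opt}(G')\le(2+2c)\,\tau(G)$, the first $L$-reduction inequality with $\alpha=2+2c$.

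For the second inequality, given any minimal spanning $DD_2$-subgraph $H$ of $G'$ I would first \emph{normalize} it: show that $H$ can be transformed in polynomial time, without increasing $|E(H)|$, into a canonical gadget-respecting subgraph from which a vertex cover $S$ of $G$ is read off with $|E(H)|=2|S|+cn$. Then $|S|-\tau(G)=\tfrac12\bigl(|E(H)|-\mathrm{opt}(G')\bigr)$, giving the second inequality with $\beta=\tfrac12$. Combining the two inequalities completes the $L$-reduction, and together with Corollary \ref{3-reg-APX} (membership of \textsc{Min-}$DD_2$ in \textsc{APX}) this establishes \textsc{APX}-completeness on graphs of maximum degree $4$.

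The main obstacle is the normalization step. A minimal spanning $DD_2$-subgraph of $G'$ need not respect the gadget boundaries: by Proposition \ref{characterization-DD-2} its components may be stars, a cycle $C_4$, or subdivisions of corona multigraphs, and its edges may be routed so that, for example, a gadget edge $v\,g_v$ is used for domination or some $x_e$ sits in $A$ even though $e$ is covered. The heart of the argument is a case analysis of these local configurations showing that each deviation can be rerouted -- exchanging which two $B$-neighbours a given $A$-vertex retains, or swapping the roles of $x_e$ and an endpoint -- so as to reach the canonical form while never increasing the edge count, and in particular that no star, $C_4$, or corona-subdivision component can beat the canonical $2|S|+cn$ bound. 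Making the gadget rigid enough that this rerouting is always possible and lossless, yet light enough that $c$ stays a constant and $\Delta(G')=4$ is preserved, is the delicate part; the remaining steps (the degree bound, $DD_2$-membership via Proposition \ref{charac-DD-2-weaksupport-certified}, and the arithmetic for $\alpha$ and $\beta$) are routine.
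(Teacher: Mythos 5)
Your high-level strategy is the same as the paper's (an $L$-reduction from \textsc{Min-VC} on cubic graphs, subdividing every edge and attaching a constant-size gadget to each original vertex to get $\Delta=4$, followed by a normalization argument), but your specific gadget breaks the cost accounting in a way that inverts the objective. Since every pendant edge of $G'$ must appear in any spanning $DD_2$-subgraph (components have at least $3$ vertices and no isolated vertex), both edges $g_v\ell_v^1$ and $g_v\ell_v^2$ are forced, so $g_v$'s component is the star on its two leaves, possibly extended by $v$. Now count your canonical solution for a vertex cover $S$: each $v\in S$ contributes $2+|JS_v|$ edges (its gadget star $K_{1,2}$ plus the star at $v$ over its assigned subdivision vertices), each $v\notin S$ contributes $3$ edges (the gadget star $K_{1,3}$ absorbing $v$ as a third leaf), and $\sum_{v\in S}|JS_v|=m$; the total is $3n+m-|S|$, which \emph{decreases} as $|S|$ grows. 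Joining the cover saves the edge $vg_v$ and costs nothing extra, so minimizing $|E(H)|$ pushes toward \emph{maximum} minimal vertex covers: your reduction points at \textsc{Max-Min-VC}, not \textsc{Min-VC}, and the claimed identity $|E(H)|=2|S|+cn$ simply does not hold for your gadget. The paper's gadget (a $C_4$ on $p_i,q_i,r_i,s_i$ plus the pendant path $s_it_iv_i$) is engineered precisely to get the sign right: a cover vertex uses the full $C_4$ (4 edges) plus the edge $t_iv_i$, while a non-cover vertex splits its gadget into two $P_3$'s (4 edges total), yielding $|E_S|=\frac{11n}{2}+|S|$, increasing in $|S|$ with unit coefficient.

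There are two further gaps. First, when a cover vertex $v$ has $|JS_v|=1$, your star at $v$ degenerates to $K_{1,1}$, which is not a minimal $DD_2$-graph; the paper avoids this by putting $t_i$ into the star so it always has at least two leaves, whereas your $g_v$ cannot be borrowed as a second leaf: its two forced pendant edges drag $\ell_v^1,\ell_v^2$ into the same component, and the resulting $5$-vertex spider (edges $g_v\ell_v^1$, $g_v\ell_v^2$, $g_vv$, $vx_e$) is neither a star, a $C_4$, nor a subdivision of a corona multigraph by Lemma \ref{characterization-subdivision} --- one checks directly it is not a $DD_2$-graph at all, since either the leaves or $x_e$ fail to be dominated. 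Second, the normalization you defer as ``the delicate part'' is in fact the main content of the paper's proof: an explicit, monotone rerouting (for an uncovered edge $v_iv_j$, add $p_is_i$ and $r_is_i$, delete $s_it_i$ and the edges $e_{v_iv_k}v_k$) that grows the associated set $S'$ into a vertex cover without ever increasing $|E'|$, terminating in at most $\frac{3n}{4}$ rounds. Flagging this step does not discharge it, and with your gadget it cannot succeed anyway, because the exchange argument requires cover membership to cost more, not less; repairing the gadget along the paper's lines is a prerequisite to even attempting it.
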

\begin{proof}
From a $3$-regular graph $G=(V, E)$, an instance of \textsc{Min-VC}, we construct $H=(V_H,E_H)$, an instance of \textsc{Min-$DD_2$}, in the following way. After making a copy of $G$, we replace each edge $uv \in E$ with a pair of edges $ue_{uv}$ and $e_{uv}v$ by introducing a new vertex $e_{uv}$. Then we add a gadget $H_{v_i}$ for every vertex $v_i\in V,$ where $H_{v_i}=(\{p_i,q_i,r_i,s_i,t_i\}, \{p_iq_i, q_ir_i, r_is_i,$ $s_ip_i, s_it_i,\})$. Finally, we introduce edge $t_iv_i$ for each vertex $v_i \in V$. Formally, $H$ is the graph in which $V_H=V\cup \{e_{uv} \mid uv\in E\}\cup \bigcup_{v_i\in V}\{p_i,q_i,r_i,s_i,t_i\}$ and $E_H=\{ue_{uv},e_{uv}v \mid uv\in E\}\cup E_V,$ where $E_V=\bigcup_{v_i\in V}\{p_iq_i,q_ir_i,r_is_i,s_ip_i,s_it_i,t_iv_i\}.$ Clearly, $H$ has maximum degree $4$ with $|V_H|=6|V| + \frac{3|V|}{2}$ and $|E_H|=2|E|+6|V|.$ Hence, $H$ can be constructed in polynomial time. It is easy to observe that $H$ is a $DD_2$-graph, as the degree of each vertex in $H$ is at least 2 (by Lemma \ref{lemma2.2}). 
For an illustration of the construction of $H$, we refer to Figure \ref{min_dd_2_apx_hard}.

\begin{figure}[htbp]
  \centering
     \includegraphics[width=11cm, height=7 cm]{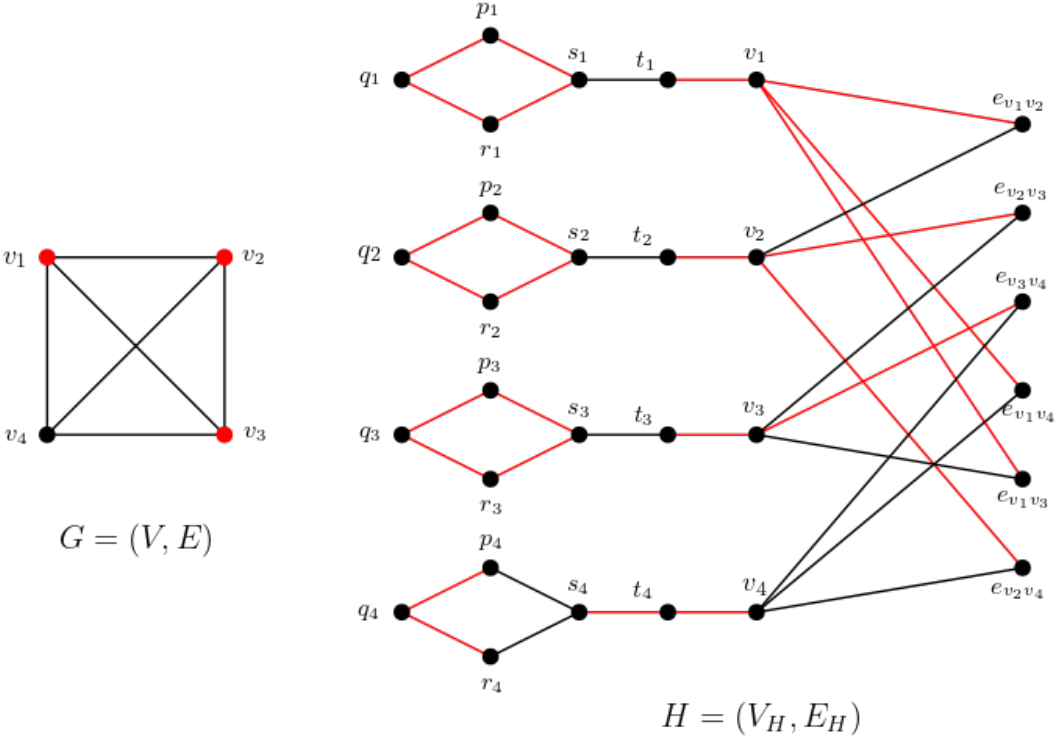}
\caption{Illustrations of the graph $H$ constructed from $G$ in the proof of Theorem \ref{MinDD_2APXhard}. 
The red colored edges in $H$ forms a minimal $DD_2$-subgraph $H_S$ corresponding to the minimal vertex cover $S$ in $G$ and the ordering $<v_1, v_2, v_3>$ of $S$.}
\label{min_dd_2_apx_hard}
 \end{figure}

First, we will show that for each minimal vertex cover $S$ of $G$ one can construct a minimal spanning $DD_2$-subgraph $H_S=(V_H, E_S)$ of $H$ such that $|E_S| = \frac{11n}{2} + |S|.$

Let $S$ be a minimal vertex cover in $G$ with $|S| = k$. Let us assume that the vertices in $S$ are ordered as <$v_{i_1}, v_{i_2}, \ldots v_{i_k}$>. From $S$, we construct a subgraph $H_S$ of $H$ whose components are defined as follows. First, we define $k$ vertex sets $JS_{v_{i_j}} = \{x \in V \mid v_{i_j}x \in E \mbox{~and~} x \notin \{v_{i_1}, \ldots, v_{i_{j-1}}\}\}$, for $j \in [k].$ It can be observed that $1 \leq |JS_{v_{i_j}}| \leq 3$ (as $S$ is a minimal vertex cover in $G$ and $G$ is 3-regular). 
 
For each $v_{i_j} \in S$, we define subgraphs of $H$ as follows. 
\begin{enumerate}
\item[•] $FS_{v_{i_j}}^1=(V_{v_{i_j}}^1,E_{v_{i_j}}^1),$ where 
 $V_{v_{i_j}}^1=\{p_{i_j},q_{i_j},r_{i_j},s_{i_j}\}, 
 E_{v_{i_j}}^1=\{p_{i_j}q_{i_j},q_{i_j}r_{i_j},r_{i_j}s_{i_j},s_{i_j}p_{i_j}\};$ 
\item[•] $FS_{v_{i_j}}^2=(V_{v_{i_j}}^2,E_{v_{i_j}}^2),$ where 
 $V_{v_{i_j}}^2=\{t_{i_j}, v_{i_j}\} \cup \{e_{v_{i_j}x} \mid x \in JS_{v_{i_j}}\}, 
 E_{v_{i_j}}^2=\{t_{i_j}v_{i_j}\} \cup \{v_{i_j}e_{v_{i_j}x} \mid x \in JS_{v_{i_j}}\};$ 
\end{enumerate}
For each $v_i \in V \setminus S$, we define subgraphs of $H$ as follows.
\begin{enumerate}
\item[•] $FS_{v_i}^3=(V_{v_i}^3,E_{v_i}^3),$ where 
 $V_{v_i}^3=\{p_i,q_i,r_i\}, E_{v_i}^3=\{p_iq_i,q_ir_i\};$
\item[•] $FS_{v_i}^4=(V_{v_i}^4,E_{v_i}^4),$ where 
 $V_{v_i}^4=\{v_i,t_i,s_i\}, E_{v_i}^4=\{v_it_i,t_is_i\}.$
\end{enumerate}

%\begin{figure}[ht]
%  \centering
%     \includegraphics[width=10cm, height=6.5cm]{min_dd_2_apx_hard_mark.eps}
%\caption{Illustrations of reduction used in Theorem \ref{MaxDD_2APXhard} through an example.}
%\label{max_dd_2_apx_hard}
%\end{figure}

It is easy to observe that, for each ${v_{i_j}} \in S$, $FS^1_{v_{i_j}}$ is isomorphic to $C_4$ and $FS^2_{v_{i_j}}$ is isomorphic to $K_{1,z_{i_j}}$, where $z_{i_j} = |JS_{v_{i_j}}|+1$. For each $v_i \in V \setminus S$, $FS^3_{v_i}$ and $FS^4_{v_i}$ are  isomorphic to $K_{1,2}$. Hence, these subgraphs are minimal $DD_2$-subgraphs of $H$. Also, the union of these subgraphs forms a minimal spanning $DD_2$-subgraph $H_S$ of $H$ with $|E_S| = 5|S| + 4(n- |S|) + \frac{3n}{2} = \frac{11n}{2}+k$.

Next, we show that a minimum spanning $DD_2$-subgraph  $(V_H, E')$ of $H$ satisfies some properties. From a minimal spanning $DD_2$-subgraph $(V_H, E')$ of $H$ having these properties we can compute a minimal vertex cover $S$ in $G$ with $|S| = \frac{11n}{2} -|E'|$.

Let $(V_H, E')$ be a minimal spanning $DD_2$-subgraph of $H$ such that, for each $v_i \in V$, the edge set $EC_4(v_i)=\{p_iq_i, q_ir_i, r_is_i, s_ip_i\}$ is not a subset of $E'$. Then it can be observed that $E' = E_H \setminus \{s_ip_i, s_ir_i \mid v_i \in V\}$ and $|E'| = 7n$. In this case, $(V_H, E')$ has $n+1$ components with $n$ components isomorphic to $P_3$, and the largest component is a subdivision of a corona graph. 
Such minimal spanning $DD_2$-subgraphs of $H$ are not of minimum size. This is because the minimal spanning $DD_2$-subgraphs  $(V_H, E_S)$ constructed from a minimal vertex cover $S$ in $G$ satisfies $|E_S| = \frac{11n}{2} +|S| < 7n$. 

Now onward, we will consider the minimal spanning $DD_2$-subgraphs $(V_H, E')$ of $H$ such that for at least one $v_i \in V$, $EC_4(v_i) \subseteq E'$. 
%This is because a minimal spanning $DD_2$-subgraph of $H$ that does not satisfy this property can not be a minimum spanning $DD_2$-subgraph of H. 
Now, given such a set $E'$ we construct $S' = \{v_i \mid EC_4(v_i) \subseteq E'\}$ and we will say that $S'$ is associated with the set $E'$. If $S'$ is a vertex cover of $G$, then we are done. Otherwise, in polynomial time we will construct a new minimal spanning $DD_2$-subgraph $(V_H, E'')$ of $H$ such that $|E''| < |E'|$ and the associated set $S''$ is a vertex cover in $G$.
%\todo[]{Since we are given $(V_H, E')$ is of minimum cardinality, so that will be the contradiction that we can't construct another minimal set with less cardinality than $E'.$} 
Because of this algorithm, we will consider minimal spanning $DD_2$-subgraphs $(V_H, E')$ of $H$ such that the associated set $S'$ is a vertex cover in $G$.

Suppose $(V_H, E')$ is a minimal spanning $DD_2$-subgraph of $H$ such that $S'$ (associated with $E'$) is not a vertex cover in $G$. Then $G[V\setminus S']$ has at least one edge and let it be $v_iv_j$. Let $T$ be the connected component in $(V_H, E')$ having the vertex $v_i$. It can be observed that $T$ is either a path $P(s_i, s_j)=(s_i, t_i, v_i, e_{v_iv_j}, v_j, t_j,s_j)$ with seven vertices (a subdivision of $P_4$ which is also a corona graph) or a subdivision of a corona graph. From this structural property of $T$ it follows that if $v_k \in N_G(v_i)\setminus \{v_j\}$ be a vertex in $T$ then the path $P(s_i, s_k) = (s_i, t_i, v_i, e_{v_iv_k}, v_k, t_k,s_k)$ is also in $T$. Now, we update the edge set $E'$ by adding two edges $p_is_i$ and $r_is_i$ to $E'$, deleting the edges $s_it_i, e_{v_iv_j}v_j$ from $E'$, and deleting the edge $e_{v_iv_k}v_k$ from $E'$, for each such vertex $v_k \in N_G(v_i)$ in $T$. Note that by this process, the size of $E'$ does not increase, and the vertex $v_i$ is included in the set $S'$. We will continue updating $E'$ and $S'$ until $S'$ becomes a vertex cover in $G$. In this process, we are breaking the component $T$ into new components, with each one a minimal $DD_2$-subgraphs. This would imply that at the end we obtain a minimal set $E'$ such that $(V_H, E')$ is a minimal spanning $DD_2$-subgraph of $H$ with $S'$ a vertex cover in $G$ and $|E'| = \frac{11n}{2} + |S'|$. Since this updating process can continue at most $\frac{3n}{4}$ updates (as a minimal vertex cover has size at most $\frac{3n}{4}$ in a 3-regular graph), it is a polynomial time algorithm.
%Given a minimal spanning $DD_2$-subgarph $(V_H, E')$, in polynomial time we can construct a new minimal set $E''$ such that the corresponding set $S''$ is a vertex cover in $G$, $|E''| = \frac{11n}{2} + |S''| \leq |E'|$. \todo[]{Here we are not constructing $E"$.}

Based on these observations, we conclude that if $E^*$ is a minimum spanning $DD_2$-subgraph of $H$ and the corresponding set $S^*$ is a minimum vertex cover in $G$ then $|E^*| = \frac{11n}{2} + |S^*| \leq 23|S^*|$ (as $|S^*| \geq \frac{n}{4}$).
Also, for any minimal spanning $DD_2$-subgraph $E'$ of $H$, we have $|S'| -|S^*| = |E'| - |E^*|$. These two inequalities show that the above reduction is an $L$-reduction with $\alpha=23$ and $\beta=1.$ Therefore, \textsc{Min-$DD_2$} is \textsc{APX}-hard for graphs with maximum degree at most $4$.
\end{proof}

From Theorem \ref{MinDD_2APXhard} and Corollary \ref{3-reg-APX}, the following corollary holds. 
\begin{corollary}
The \textsc{Min-$DD_2$} Problem is \textsc{APX}-complete for graphs with maximum degree at most $4$.
\end{corollary}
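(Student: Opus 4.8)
The plan is to establish \textsc{APX}-completeness by verifying the two defining conditions separately: membership in \textsc{APX} and \textsc{APX}-hardness, each restricted to the class of graphs with maximum degree at most $4$. Recall that a problem is \textsc{APX}-complete precisely when it lies in \textsc{APX} and is \textsc{APX}-hard (i.e.\ every problem in \textsc{APX} $L$-reduces, or \emph{AP}-reduces, to it). Since both ingredients are already available in the excerpt, the corollary is obtained by simply combining them, so the proof will be short.

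First I would argue membership. Corollary \ref{3-reg-APX} asserts that \textsc{Min-$DD_2$} is in \textsc{APX} for \emph{any} graph $G$, via the constant-factor ($3$-approximation) algorithm of the first theorem of Section 3. Because the class of graphs with maximum degree at most $4$ is a subclass of all graphs, the same approximation algorithm applies verbatim and still achieves ratio $3$ on these instances. Hence \textsc{Min-$DD_2$} restricted to graphs with $\Delta(G)\le 4$ is in \textsc{APX}. This step requires only the observation that restricting the input class can never destroy a universally valid approximation guarantee.

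Next I would invoke hardness. Theorem \ref{MinDD_2APXhard} constructs, from any $3$-regular instance $G$ of \textsc{Min-VC}, a graph $H$ of maximum degree $4$ together with an $L$-reduction (with constants $\alpha=23$, $\beta=1$) from \textsc{Min-VC} on $3$-regular graphs to \textsc{Min-$DD_2$} on graphs of maximum degree $4$. Since \textsc{Min-VC} on cubic graphs is known to be \textsc{APX}-hard, and $L$-reductions compose and preserve \textsc{APX}-hardness, it follows that \textsc{Min-$DD_2$} is \textsc{APX}-hard on the class $\Delta\le 4$. Combining this with the membership established above yields \textsc{APX}-completeness.

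I do not anticipate a genuine technical obstacle here, as the result is a direct corollary; the only point demanding a moment's care is to make sure the \emph{same} restricted graph class carries both properties. This alignment is automatic: the membership result of Corollary \ref{3-reg-APX} is stated for all graphs (hence in particular for $\Delta\le 4$), while the hardness in Theorem \ref{MinDD_2APXhard} is produced by a reduction whose output graphs satisfy $\Delta(H)=4$. Thus no mismatch arises, and the corollary follows immediately.
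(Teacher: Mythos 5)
Your proposal is correct and matches the paper's reasoning exactly: the paper derives this corollary by combining Corollary \ref{3-reg-APX} (membership in \textsc{APX} via the $3$-approximation, which holds for all graphs and hence for the subclass $\Delta \le 4$) with the \textsc{APX}-hardness $L$-reduction of Theorem \ref{MinDD_2APXhard}, whose output graphs have maximum degree $4$. Your additional remark about checking that both properties hold on the \emph{same} restricted class is the right point of care, and it is indeed satisfied here.
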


\begin{theorem}
\label{MaxDD_2APXhard}
\textsc{Max-$DD_2$} is \textsc{APX}-hard for graphs with maximum degree $4$.
\end{theorem}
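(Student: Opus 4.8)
The plan is to establish an $L$-reduction from \textsc{Max-IS} restricted to $3$-regular graphs, which is known to be \textsc{APX}-complete, to \textsc{Max-$DD_2$} on graphs of maximum degree $4$. I would reuse the skeleton of the construction in Theorem \ref{MinDD_2APXhard}: starting from a cubic graph $G=(V,E)$, subdivide every edge $uv$ by a vertex $e_{uv}$ and attach to each $v_i$ a constant-size gadget, keeping $d_H(v_i)=4$ and every other vertex of degree at most $4$, and arranging (by Lemma \ref{lemma2.2}) that $H$ is a $DD_2$-graph. The gadget must, however, be tuned differently from the minimization case: here I want the \emph{densest} local minimal $DD_2$-pattern (the one containing a $C_4$, whose edge count equals its vertex count) to be available at $v_i$ exactly when $v_i$ is placed in the independent set, and I want this pattern to ``consume'' the incident subdivision vertices $e_{v_iv_j}$ so that two adjacent vertices cannot simultaneously use it. The target is a linear identity $|E'| = c\,n + |I'|$ tying the size of a (normalized) minimal spanning $DD_2$-subgraph $(V_H,E')$ to an independent set $I'$ of $G$, so that maximizing $|E'|$ is the same as maximizing $|I'|$.

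For the forward direction I would take an independent set $I$ of $G$ and assemble a spanning subgraph $H_I$ component by component, in the spirit of Theorem \ref{MinDD_2APXhard}: each $v_i\in I$ contributes the $C_4$ on its gadget plus a star absorbing $t_i$, $v_i$ and the subdivision vertices on its incident edges, while each $v_i\notin I$ contributes two sparse ($K_{1,2}$-type) pieces. By Proposition \ref{characterization-DD-2} every piece is a minimal $DD_2$-graph, their union is a minimal spanning $DD_2$-subgraph, and a direct edge count gives $|E_I| = c\,n + |I|$; since $I$ is independent, each subdivision vertex $e_{v_iv_j}$ is claimed by at most one gadget, so the construction is consistent.

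For the backward direction, given any minimal spanning $DD_2$-subgraph $(V_H,E')$ I would classify its components using Proposition \ref{characterization-DD-2} and Lemma \ref{characterization-subdivision}, and then \emph{normalize} it in polynomial time without decreasing $|E'|$ (we are maximizing) until the set $I'=\{v_i : \text{the dense } C_4\text{-pattern sits on } v_i\text{'s gadget}\}$ is an independent set of $G$: whenever an edge $v_iv_j$ has both endpoints dense, the shared subdivision vertex $e_{v_iv_j}$ forces one of the two local patterns to be re-routed into a configuration with at least as many edges but one fewer dense gadget, so the process terminates after $O(n)$ steps with $|E'| = c\,n + |I'|$. Extracting $I'$ yields $|\alpha(G)-|I'|| = |opt - |E'||$, hence $\beta=1$; combined with $opt_{\textsc{Max-}DD_2}(H) = c\,n + \alpha(G) \le (4c+1)\,\alpha(G)$, which uses the bound $\alpha(G)\ge n/4$ valid for cubic graphs, this is an $L$-reduction with $\alpha=4c+1$ and $\beta=1$, and \textsc{APX}-hardness follows.

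The main obstacle is the backward normalization. Unlike the minimization proof, where one could freely trade a dense $C_4$ configuration for sparser pieces in order to \emph{decrease} the edge count, here I must engineer the gadget so that no edge-dense minimal $DD_2$-configuration exists that beats the independent-set-encoding ones yet fails to correspond to any independent set; in particular the analogue of the degenerate $|E'|=7n$ configuration available in the Theorem \ref{MinDD_2APXhard} graph must be excluded or shown sub-optimal, since otherwise the maximum would be a graph-independent constant carrying no information about $\alpha(G)$. Verifying that every re-routing step is edge-non-decreasing while strictly reducing the number of conflicting dense gadgets, and that the final $I'$ is genuinely independent, is where the bulk of the careful case analysis over the $C_4$, star, and single-cycle subdivided-corona component types will be required.
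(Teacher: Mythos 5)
Your high-level architecture is indeed the paper's: an $L$-reduction from \textsc{Max-IS} on $3$-regular graphs, subdivision vertices $e_{uv}$, a constant-size per-vertex gadget, a linear identity $|E'| = cn + |I'|$, a size-non-decreasing normalization in the backward direction, and $\alpha(G)\geq \frac{n}{4}$ to get constants $\alpha,\beta$. But the one step you explicitly defer --- engineering the gadget so that no dense minimal configuration beats the independent-set-encoding ones --- is the entire content of the theorem, and the specific design you lean toward is the one that provably fails. You propose keeping $\delta(H)\geq 2$ (invoking Lemma \ref{lemma2.2}) and making the dense pattern a $C_4$-containing component, i.e., essentially reusing the gadget of Theorem \ref{MinDD_2APXhard}. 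On that graph, however, the degenerate configuration you worry about is not just possible but optimal: deleting $\{s_ip_i, s_ir_i\}$ for every $v_i\in V$ leaves a minimal spanning $DD_2$-subgraph with $7n$ edges, whereas every encoding configuration has $\frac{11n}{2}+|S'|\leq \frac{13n}{2} < 7n$ edges. So on the minimum-degree-$2$ construction the value of \textsc{Max-$DD_2$} is a quantity carrying no information about $\alpha(G)$, and you give no alternative gadget for which your normalization claims could even be checked. Flagging the obstacle is not the same as removing it; as written, the reduction does not exist yet.

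The paper resolves this by moving in the opposite direction from your instinct: the gadget is a \emph{pendant path} $p_iq_ir_is_i$ attached by the edge $s_iv_i$, so $H$ has leaves, and $DD_2$-ness is certified via the weak-support characterization (Proposition \ref{charac-DD-2-weaksupport-certified}) rather than Lemma \ref{lemma2.2}. The dense pattern is then \emph{acyclic}: by Proposition \ref{characterization-DD-2}, in any minimal spanning $DD_2$-subgraph each leaf $p_i$ must lie in a component that is either the path $(p_i,q_i,r_i)$ or the path $(p_i,q_i,r_i,s_i,v_i)$, and each $v_i$ lies either in that $P_5$ or in a star centered at $v_i$; hence every gadget contributes exactly $3$ or $4$ edges, the extra edge $r_is_i$ being available exactly when $v_i$ retires into its own gadget. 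Independence of $I'=\{v_i \mid (p_i,q_i,r_i,s_i,v_i) \mbox{ is a component}\}$ is forced for free: if $v_iv_j\in E$ and both $P_5$'s were components, the subdivision vertex $e_{v_iv_j}$ would be an isolated component, which is not a $DD_2$-graph. After a size-preserving rerouting of the cross components $(s_i,v_i,e_{v_iv_j},v_j,s_j)$, this gives $|E'|=\frac{9n}{2}+|I'|$ and $|E^*|=\frac{9n}{2}+|I^*|\leq 19|I^*|$, i.e., an $L$-reduction with $\alpha=19$, $\beta=1$. So the leaf structure you tried to avoid is precisely what makes the component shapes rigid enough to exclude the degenerate optimum, and without that (or an equivalent) mechanism your proof has a genuine gap.
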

\begin{proof}
We prove this theorem with the help of a reduction from \textsc{Max-IS} which is similar to the reduction given in Theorem \ref{MinDD_2APXhard}.

From a $3$-regular graph $G=(V, E)$, an instance of \textsc{Max-IS}, we construct $H=(V_H,E_H)$, an instance of \textsc{Max-$DD_2$}, in the following manner. After making a copy of $G$, we replace each edge $uv \in E$ with a pair of edges $ue_{uv}$ and $e_{uv}v$ by introducing a new vertex $e_{uv}$. Then we add a gadget $H_{v_i}$ for every vertex $v_i\in V,$ where $H_{v_i}=(\{p_i,q_i,r_i,s_i\}, \{p_iq_i, q_ir_i, r_is_i\}).$ Finally, we introduce edge $s_iv_i$ for each vertex $v_i \in V$. Formally, $H$ is the graph in which $V_H=V\cup \{e_{uv} \mid uv\in E\}\cup \bigcup_{v_i\in V}\{p_i,q_i,r_i,s_i\}$ and $E_H=\{ue_{uv},e_{uv}v \mid uv\in E\}\cup E_V,$ where $E_V=\bigcup_{v_i\in V}\{p_iq_i,q_ir_i,r_is_i,s_iv_i\}.$ Clearly, $H$ has maximum degree $4$ with $|V_H|=5|V| + \frac{3|V|}{2}$ and $|E_H|=2|E|+4|V|.$ Hence, $H$ can be constructed in polynomial time. It is easy to observe that $H$ is a $DD_2$-graph, as every weak support vertex in $H$ is adjacent to at least one vertex that is neither a pendant nor a support vertex (by Proposition \ref{charac-DD-2-weaksupport-certified}). 
%For an illustration of the construction of $H$, we refer to Figure \ref{max_dd_2_apx_hard}.

\begin{figure}[htbp]
  \centering
     \includegraphics[width=10cm, height=6.5cm]{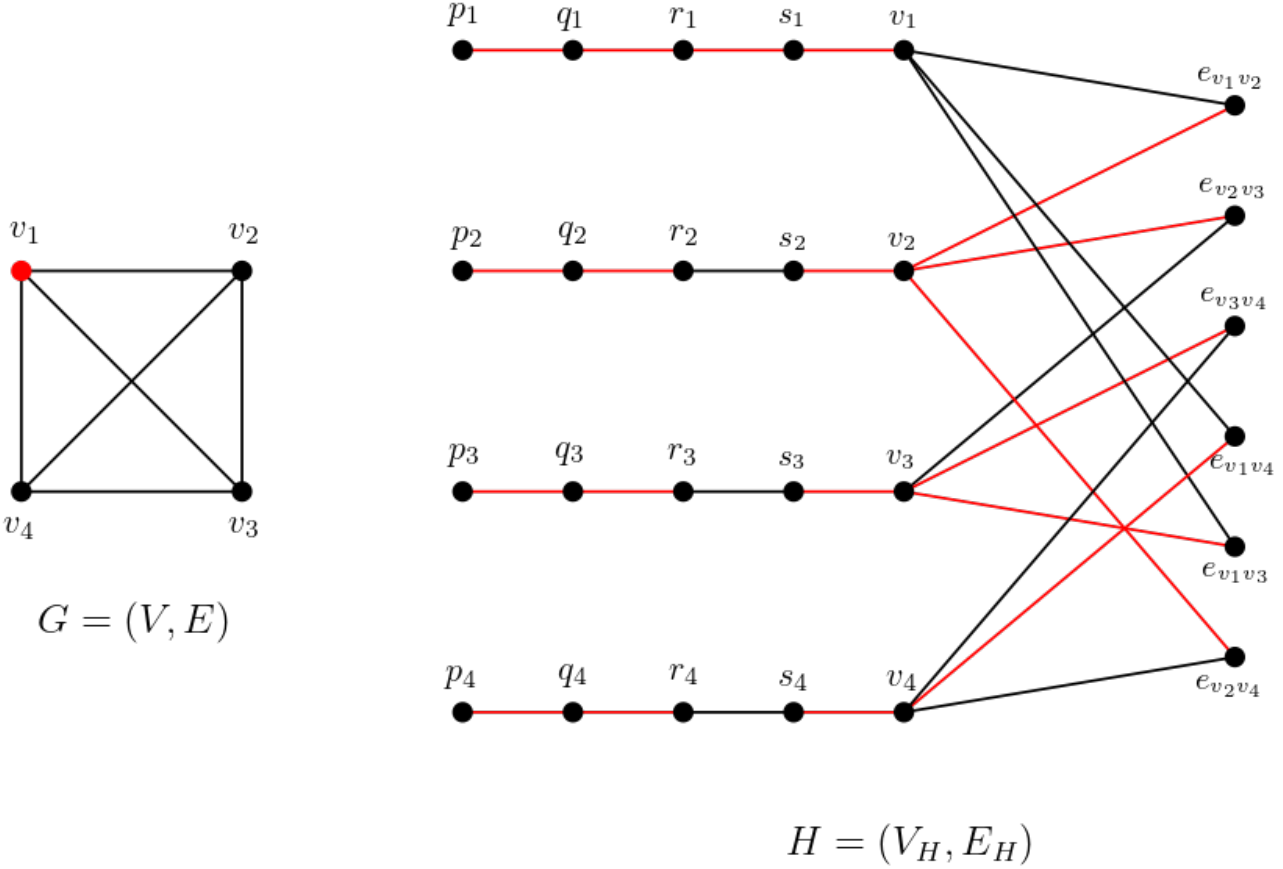}
\caption{
Illustrations of the graph $H$ constructed from $G$ in the proof of Theorem \ref{MaxDD_2APXhard}. 
The red colored edges in $H$ forms a minimal $DD_2$-subgraph $H_I$ of corresponding to the independent set $I=\{v_1\}$ in $G$ and the ordering $<v_2, v_3, v_4>$ of $V \setminus I$.}
\label{max_dd_2_apx_hard}
\end{figure}

First, we will show that for each maximal independent set $I$ of $G$, one can construct a minimal spanning $DD_2$-subgraph $H_I=(V_H, E_I)$ of $H$ such that $|E_I| = \frac{9n}{2} + |I|.$

Let $I$ be a maximal independent set of $G$ with $|I| = k$. Let us assume that the vertices not in $I$ are ordered as <$v_{i_1}, v_{i_2}, \ldots v_{i_{n-k}}$>. From $I$, we construct a subgraph $H_I$ of $H$ whose components are defined as follows. First, we define $n-k$ vertex sets $JS_{v_{i_j}} = \{x \in V \mid v_{i_j}x \in E \mbox{~and~} x \notin \{v_{i_1}, \ldots, v_{i_{j-1}}\}\}$, for $j \in [n-k]$. It can be observed that $1 \leq |JS_{v_{i_j}}| \leq 3$ (as $I$ is a maximal independent set in $G$ and $G$ is 3-regular). 
 
For each $v_{i_j} \in V\setminus I$, we define subgraphs of $H$ as follows. 
\begin{enumerate}
\item[•] $FS_{v_{i_j}}^1=(V_{v_{i_j}}^1,E_{v_{i_j}}^1),$ where 
 $V_{v_{i_j}}^1=\{p_{i_j},q_{i_j},r_{i_j}\}, 
 E_{v_{i_j}}^1=\{p_{i_j}q_{i_j},q_{i_j}r_{i_j}\};$ 
\item[•] $FS_{v_{i_j}}^2=(V_{v_{i_j}}^2,E_{v_{i_j}}^2),$ where 
 $V_{v_{i_j}}^2=\{s_{i_j}, v_{i_j}\} \cup \{e_{v_{i_j}x} \mid x \in JS_{v_{i_j}}\}, 
 E_{v_{i_j}}^2=\{s_{i_j}v_{i_j}\} \cup \{v_{i_j}e_{v_{i_j}x} \mid x \in JS_{v_{i_j}}\};$ 
\end{enumerate}
For each $v_i \in I$, we define subgraphs of $H$ as follows.
\begin{enumerate}
\item[•] $FS_{v_i}^3=(V_{v_i}^3,E_{v_i}^3),$ where 
 $V_{v_i}^3=\{p_i,q_i,r_i,s_i,v_i\}, E_{v_i}^3=\{p_iq_i,q_ir_i,r_is_i,s_iv_i\};$
\end{enumerate}

It is easy to observe that, for each $v_{i_j} \in V\setminus I$, $FS^1_{v_{i_j}}$ is isomorphic to $K_{1,2}$ and $FS^2_{v_{i_j}}$ is isomorphic to $K_{1,z_{i_j}}$, where $z_{i_j} = |JS_{v_{i_j}}|+1$. For each $v_i \in I$, $FS^3_{v_i}$ is isomorphic to subdivision of a corona graph $K_{1,2}$. Hence, these subgraphs are minimal $DD_2$-subgraphs of $H$. Also, the union of these subgraphs forms a minimal spanning $DD_2$-subgraph $H_I$ of $H$ with $|E_I| = 4|I| + 3(n- |I|) + \frac{3n}{2} = \frac{9n}{2}+|I|$.

The above observation shows that $H$ is not a minimal spanning $DD_2$-graph. Let $(V_H, E')$ be a minimal spanning $DD_2$-subgraph of $H$. This subgraph satisfies the following properties. (a) For each vertex $v_i \in V$, the vertex $p_i$ must be in a component which is either a path $P(p_i, r_i) = (p_i, q_i, r_i)$ or a path $P(p_i, v_i)=(p_i, q_i, r_i, s_i, v_i)$ 
(because $P_3$ and $P_5$ are the only paths which are $DD_2$ graphs). (b) Each vertex $v_i$ must be in a component, either a path $P_5$ or a $K_{1, t}$ ($1 \leq t \leq 4$) with $v_i$ as a center vertex. Suppose $v_i$ is in the 
component $P(s_i, s_j)=(s_i, v_i, e_{v_i v_j}, v_j, s_j)$ (for some edge $v_iv_j \in E$), which is a $P_5$. Note that, $P(p_i, r_i)$ and $P(p_j, r_j)$
are also components in $(V_H, E')$. In this case, we update the edge set $E'$ as $(E' \cup \{r_is_i\}) \setminus \{v_ie_{v_i v_j}\}$. Here, the size of $E'$ remains unchanged, and both $v_i$ and $v_j$ do not appear in a single component in $(V_H, E')$, for all $v_iv_j \in E'$.

Given a minimal spanning $DD_2$-subgraph $(V_H, E')$ of $H$, we claim that the set $S' = \{v_i \mid P(p_i, v_i)$ is a component in $(V_H, E')\}$ is an independent set of $G$. For any edge $v_iv_j \in E$, the paths $P(s_i, v_i)$ and $P(s_j, v_j)$ can not be components in $(V_H, E')$. If so, then $(V_H, E')$ has a component with only one vertex $e_{v_i v_j}$, which is not a $DD_2$ graph. Now, it is easy to observe that $|E'| = \frac{9n}{2} + |S'|$.

Based on these observations, we conclude that if $(V_H, E^*)$ is a maximum size minimal spanning $DD_2$-subgraph of $H$ then the corresponding set $I^*$ is a maximum independent set in $G$ and $|E^*| = \frac{9n}{2} + |I^*| \leq 19|I^*|$ (as $|I^*| \geq \frac{n}{4}$).
Also, for any minimal spanning $DD_2$-subgraph $(V_H, E')$ of $H$, we have $|I'| -|I^*| = |E'| - |E^*|$. These two inequalities show that this is an $L$-reduction with $\alpha=19$ and $\beta=1.$ Therefore, \textsc{Max-$DD_2$} is \textsc{APX}-hard for graphs with maximum degree at most $4$.
\end{proof}

From Theorem \ref{MaxDD_2APXhard} and Corollary \ref{3-reg-APX}, the following corollary holds: 
\begin{corollary}
The \textsc{Max-$DD_2$} Problem is \textsc{APX}-complete for graphs with maximum degree at most $4$.
\end{corollary}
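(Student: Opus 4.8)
The plan is to prove APX-completeness by verifying its two defining components separately — membership in APX and APX-hardness — each of which is already available from results stated earlier. Recall that an NPO problem is \emph{APX-complete} for a class of instances if it both lies in APX on that class (i.e.\ admits a polynomial-time constant-factor approximation algorithm) and is APX-hard there (every APX problem PTAS-reduces to it, equivalently there is an $L$-reduction to it from a problem already known to be APX-hard). So the whole argument reduces to combining two facts for the restricted input class of graphs with maximum degree at most $4$.

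First I would establish membership in APX. Corollary \ref{3-reg-APX} asserts that \textsc{Max-$DD_2$} lies in APX for \emph{every} graph $G$, via the factor-$3$ approximation of Theorem \ref{MaxDD_2APX}. The only point worth stating explicitly is that APX-membership is inherited by any subclass of instances: restricting the allowed inputs to graphs with $\Delta(G) \le 4$ cannot worsen the guarantee, since the very same algorithm, run on such an input, still returns a solution within the factor $3$ of the optimum. Hence \textsc{Max-$DD_2$} remains in APX on graphs with maximum degree at most $4$.

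Next I would invoke APX-hardness, which is exactly the content of Theorem \ref{MaxDD_2APXhard}: the $L$-reduction constructed there starts from \textsc{Max-IS} on $3$-regular graphs (a known APX-hard problem) and produces instances $H$ of \textsc{Max-$DD_2$} with $\Delta(H) \le 4$, so it certifies APX-hardness precisely for the class in question. Since $L$-reductions compose with the PTAS-reductions underlying the definition of APX-hardness, no further work is needed. Combining the two steps — in APX and APX-hard on the same class — yields that \textsc{Max-$DD_2$} is APX-complete for graphs with maximum degree at most $4$, as claimed. There is no substantive obstacle here; the single subtlety is the (immediate) observation that confining the instances to bounded degree preserves the constant-factor guarantee, so that the positive result of Corollary \ref{3-reg-APX} and the negative result of Theorem \ref{MaxDD_2APXhard} line up on the same restricted class.
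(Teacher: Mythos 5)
Your proposal is correct and takes essentially the same route as the paper, whose entire proof of this corollary is the one-line combination of Corollary \ref{3-reg-APX} (membership in \textsc{APX} via the factor-$3$ algorithm, valid a fortiori on the subclass of graphs with maximum degree at most $4$) with the $L$-reduction from \textsc{Max-IS} on $3$-regular graphs in Theorem \ref{MaxDD_2APXhard}. You merely make explicit what the paper leaves implicit --- that \textsc{APX}-membership is inherited by restricted instance classes and that the $L$-reduction certifies hardness on exactly that class --- so no further comment is needed.
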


\section{Lower bound on approximability of \textsc{Max-Min-to-$DD_2$}}
Here, we will prove a strong inapproximability result for \textsc{Max-Min-to-$DD_2$}. We obtain this result by a reduction from \textsc{Max-Min-VC}. Given a graph $G=(V, E)$, in \textsc{Max-Min-VC} it is required to find a vertex set $S$ in $G$ of maximum cardinality such that $S$ is a minimal vertex cover in $G$. We will use the known lower bound result on \textsc{Max-Min-VC}.

%In this section, we consider an optimization problem \textsc{Max-Min-to-$DD_2$}. Given a non-$DD_2$-graph $G=(V, E)$, in \textsc{Max-Min-to-$DD_2$} it is required to find a minimal edge set $\hat{E}$ of maximum cardinality such that $(V, E \cup \hat{E})$ is a $DD_2$ graph. We prove a strong lower bound on its approximability with the help of a reduction from \textsc{Max-Min-VC}. Given a graph $G=(V, E)$, in \textsc{Max-Min-VC} it is required to find a vertex set $S$ in $G$ of maximum cardinality such that $S$ is a minimal vertex cover in $G$. We will use the following known lower bound result on \textsc{Max-Min-VC}.
% \begin{flushleft}
% \underline{\textsc{Max-Min-to-$DD_2$ problem}}\\
% \textsc{Instance:} A non-$DD_2$-graph $G$ and $k\in \mathbf{Z^+}.$\\
% \textsc{Problem:} Is it possible to add $k$ edges to $G$ such that the resulting graph becomes a maximum size minimal $DD_2$-graph?
% \end{flushleft}
% 
% \begin{flushleft}
% \underline{\textsc{Max-Min-VC problem}}\\
% \textsc{Instance:} A graph $G$ and $k (<n)\in \mathbf{Z^+}.$\\
% \textsc{Problem:} Does there exists a maximum size minimum vertex cover of cardinality at most $k$ in $G$?
% \end{flushleft}

\begin{theorem} \cite{boria2015max} \label{max_min_VC}
For any $\varepsilon > 0$, \textsc{Max-Min-VC} can not be approximated within a ratio of  $n^{\frac{1}{2}-\varepsilon}$, unless \textsc{P=NP}.  
\end{theorem}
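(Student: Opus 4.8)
The plan is to prove this by giving an approximation-preserving reduction from \textsc{Max-IS} and leveraging the known $N^{1-\varepsilon}$-inapproximability of the independence number $\alpha(G)$ (H{\aa}stad, Zuckerman). The key mechanism is a quadratic blow-up in the vertex count, which converts the square-root's worth of hardness: an $N^{1-\varepsilon}$ gap for \textsc{Max-IS} on $N$ vertices becomes an $n^{1/2-\varepsilon}$ gap for \textsc{Max-Min-VC} on $n\approx N^{2}$ vertices.

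First I would set up the gadget. Given an instance $G=(V,E)$ of \textsc{Max-IS} with $|V|=N$, I build $H$ by keeping $G$ and attaching to every vertex $v\in V$ a pack of $s$ private leaves $\ell_{v,1},\dots,\ell_{v,s}$ (each adjacent only to $v$), so that each $v$ becomes the centre of a star $K_{1,s}$. Thus $n:=|V(H)|=N(1+s)$. The heart of the argument is the identity
\[
\operatorname{mmvc}(H)=N+(s-1)\,\alpha(G),
\]
where $\operatorname{mmvc}$ denotes the maximum size of a minimal vertex cover. I would establish it by analysing a minimal vertex cover $S$ of $H$ through the set $O=V\setminus S$ of ``released'' centres: covering the $G$-edges forces $O$ to be independent in $G$; every $v\in O$ forces all $s$ of its leaves into $S$; and minimality forces the leaves of every $v\in S$ to stay out of $S$ (a leaf whose centre is already in $S$ would be removable). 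Hence $|S|=(N-|O|)+s|O|=N+(s-1)|O|$, and conversely every independent $O$ yields a genuinely minimal cover, since each covered centre retains a private star-edge and each released leaf retains its private edge to its centre. Maximising over independent $O$ gives the identity.

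With the identity in hand I would amplify by choosing $s=N$, so that $n\approx N^{2}$ and $N\approx\sqrt n$. Using the NP-hardness of distinguishing $\alpha(G)\ge N^{1-\varepsilon}$ from $\alpha(G)\le N^{\varepsilon}$, the \textsc{yes} instances give $\operatorname{mmvc}(H)\ge (N-1)N^{1-\varepsilon}\approx n^{1-\varepsilon/2}$, while the \textsc{no} instances give $\operatorname{mmvc}(H)\le N+(N-1)N^{\varepsilon}\approx n^{(1+\varepsilon)/2}$. The induced gap is therefore at least
\[
n^{(1-\varepsilon/2)-(1+\varepsilon)/2}=n^{1/2-\varepsilon},
\]
so any polynomial-time $n^{1/2-\varepsilon}$-approximation for \textsc{Max-Min-VC} would decide the gap version of \textsc{Max-IS}, which is impossible unless \textsc{P=NP}. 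Letting $\varepsilon$ run over all positive reals yields the claimed bound for every $\varepsilon>0$.

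I expect the main obstacle to be the upper-bound half of the identity: one must rule out any minimal vertex cover exceeding $N+(s-1)\alpha(G)$, which rests squarely on the two minimality observations (the leaves of a covered centre must be excluded, and a released centre must be non-adjacent to every other released centre, forcing $O$ to be independent). The entire gain in the approximation exponent is engineered by the choice $s=N$: the quadratic blow-up $n\approx N^{2}$ is exactly what halves the exponent and pushes \textsc{Max-IS}'s $N^{1-\varepsilon}$ hardness into the square-root regime, and one must verify that the additive shift $+N$ in the identity is dominated on both the \textsc{yes} and \textsc{no} sides so that it does not erode the gap.
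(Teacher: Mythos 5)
The paper does not prove this statement; it imports it verbatim from \cite{boria2015max}, so there is no internal proof to compare against. Your argument is correct --- the identity $\mathrm{mmvc}(H)=N+(s-1)\,\alpha(G)$ for the leaf-augmented graph holds exactly as you argue, and taking $s=N$ converts the $N^{1-\varepsilon}$-inapproximability of \textsc{Max-IS} (H{\aa}stad, Zuckerman) into the $n^{\frac{1}{2}-\varepsilon}$ bound --- and it is essentially the same pendant-vertex amplification used by Boria, Della Croce and Paschos in the cited source, so beyond the routine quantifier and constant-factor bookkeeping you already sketch, nothing is missing.
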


\begin{theorem}
For any $\varepsilon >0,$ \textsc{Max-Min-to-$DD_2$} can not be approximated within a ratio of $n^{\frac{1}{6}-\varepsilon}$ for bipartite graphs, unless \textsc{P=NP}. 
\end{theorem}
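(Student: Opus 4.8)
The plan is to prove the inapproximability of \textsc{Max-Min-to-$DD_2$} by exhibiting a gap-preserving (approximation-preserving) reduction from \textsc{Max-Min-VC}, whose $n^{1/2-\varepsilon}$ hardness is recorded in Theorem \ref{max_min_VC}. The guiding idea is that the $\hat{E}$-solutions to \textsc{Max-Min-to-$DD_2$} correspond, via the characterization in Proposition \ref{charac-DD-2-weaksupport-certified}, to choices of which weak support vertices we ``fix'' by adding pendant edges, and that this choice can be forced to encode a minimal vertex cover of a given graph. Since the target hardness exponent is $\tfrac16$ rather than $\tfrac12$, the construction must inflate the instance size polynomially: if the reduction produces a bipartite $DD_2$-target graph on $N = \Theta(n^{c})$ vertices from an $n$-vertex \textsc{Max-Min-VC} instance while preserving the optimum value up to a constant factor, then a ratio of $N^{1/6-\varepsilon}$ on the image would translate to roughly $n^{c/6 - \varepsilon'}$ on the source; choosing $c=3$ yields the $n^{1/2}$ barrier and hence the claimed bound.

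First I would fix an instance $G=(V,E)$ of \textsc{Max-Min-VC} and build a bipartite gadget graph $G'$ whose weak support vertices are in bijection with the vertices of $V$, padding each vertex-gadget with $\Theta(n^{2})$ auxiliary pendant/path structure so that the total order of $G'$ is $N = \Theta(n^{3})$. The padding is the device that both (i) makes $G'$ bipartite and a valid non-$DD_2$ instance, and (ii) amplifies the size so the exponent drops from $\tfrac12$ to $\tfrac16$. Next I would verify \emph{feasibility in both directions}: given a minimal vertex cover $S$ of $G$, I construct an added edge set $\hat{E}_S$ (adding edges to pendant neighbors exactly at the support vertices dictated by $S$, as in case $(a)$ of the construction preceding Algorithm \ref{Apx-Min_to_dd_2}) so that $(V',E'\cup\hat{E}_S)$ is a minimal spanning $DD_2$-graph, with $|\hat{E}_S|$ an affine function of $|S|$; conversely, from any maximal edge set $\hat{E}$ making $G'$ a minimal $DD_2$-graph I extract a minimal vertex cover $S_{\hat E}$ of $G$ with a matching affine relation between $|\hat E|$ and $|S_{\hat E}|$. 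Proposition \ref{charac-DD-2-weaksupport-certified} is the engine here, since it reduces the $DD_2$-property entirely to a local condition on weak support vertices, and minimality of the added edge set forces the associated vertex set to be a \emph{minimal} cover.

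The hard part will be pinning down the two-way correspondence so that maximality of $\hat{E}$ corresponds precisely to maximality of the minimal vertex cover, with no ``slack'' edges that could inflate $|\hat E|$ independently of $|S_{\hat E}|$. In particular I must show that any edge we are free to add without destroying minimality of the resulting $DD_2$-graph is accounted for by the cover (the padding should be rigid, contributing a fixed number of edges to every feasible solution), so that $\mathrm{opt}_{\textsc{Max-Min-to-$DD_2$}}(G') = a\cdot \mathrm{opt}_{\textsc{Max-Min-VC}}(G) + b$ with $a,b$ depending only on $N$ and $n$. The secondary obstacle is bookkeeping the arithmetic of the gap: I need the additive constant $b$ from the padding to be dominated by the $\mathrm{opt}\cdot a$ term (using that a minimal vertex cover of $G$ has size $\Omega(n)$ whenever $G$ has no isolated vertices), so that a hypothetical $N^{1/6-\varepsilon}$-approximation for \textsc{Max-Min-to-$DD_2$} yields an $n^{1/2-\varepsilon'}$-approximation for \textsc{Max-Min-VC}, contradicting Theorem \ref{max_min_VC}. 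Finally I would confirm the image graph is genuinely bipartite and the whole construction runs in polynomial time, completing the reduction.
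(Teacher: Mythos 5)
Your overall route is the paper's route: a reduction from \textsc{Max-Min-VC}, a bipartite image graph of order $N=\Theta(n^{3})$ so that an $N^{1/6-\varepsilon}$ ratio pulls back to an $n^{1/2-\varepsilon'}$ ratio contradicting Theorem \ref{max_min_VC}, with Proposition \ref{charac-DD-2-weaksupport-certified} as the engine. But there is a genuine gap, located exactly at the point you flag as ``the hard part'' and leave unresolved, and one of your stated design decisions is wrong. In the paper's construction the weak support vertices are in bijection with the \emph{edges} of $G$, not with its vertices: each edge $uv$ becomes a vertex $e_{uv}$ carrying a single pendant $l_{uv}$, while each vertex $v_i$ of $G$ becomes a \emph{strong} support vertex carrying $m+1$ pendants $a_i^1,\dots,a_i^{m+1}$, and one universal vertex $p$ is joined to all of $V$. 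With weak supports in bijection with $V$, as you propose, nothing encodes the per-edge covering constraint; the edge gadgets are the constraints, and the two endpoints $u,v$ of $e_{uv}$ are the two ways of satisfying them.

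Second, your hope that the padding is ``rigid, contributing a fixed number of edges to every feasible solution'' fails in this construction, and the paper does not rely on it. By Proposition \ref{charac-DD-2-weaksupport-certified}, a deficient weak support $e_{uv}$ admits three repairs: (a) one edge giving $e_{uv}$ a new neighbor outside $L\cup S$; (b) one edge at its pendant $l_{uv}$; or (c) the block of $m+1$ edges $a_i^jp$ de-supporting an endpoint $v_i$. Minimal feasible solutions genuinely can use the one-edge fixes (a)/(b), so the value of an arbitrary feasible solution is \emph{not} an affine function of a cover. The missing idea is the $(m+1)$-fold cost amplification combined with an exchange argument: any minimal $\hat{E}$ employing (a) or (b) can be transformed into a strictly \emph{larger} minimal solution using a (c)-block, so a maximum solution consists only of (c)-blocks and satisfies $|\hat{E}^*|=(m+1)|S^*|$ exactly (your additive term $b$ is $0$, with the forward map simply $\hat{E}_S=\{a_i^jp \mid v_i\in S,\ j\in[m+1]\}$, which is a (c)-type fix, not the case (a) you cite); conversely, from any solution of value at least $k(m+1)$ one extracts a minimal vertex cover of size at least $k$, since an uncovered edge would force an (a)/(b) fix and trigger the same exchange. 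Without the per-edge weak supports, the $m+1$ weighting, the universal attachment vertex $p$, and this exchange step, the value-preserving two-way correspondence your plan requires does not go through.
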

\begin{proof}
Given a graph $G=(V,E)$ an instance of \textsc{Max-Min-VC}, the construction of $G'=(V',E')$ an instance of \textsc{Max-Min-to-$DD_2$}, is as follows. Assume that $V=\{v_1,v_2,\ldots,v_n\}$ and $E=\{e_1,e_2,\ldots,e_m\}.$ 
\begin{itemize}
    \item Make a copy of $V$. For each vertex $v_i \in V$, create $m+1$ new vertices $a_i^1, a_i^2, \ldots, a_i^{m+1}$ and $(m+1)$ edges $v_{i}a_i^1, \ldots, v_{i}a_i^{m+1}$.

    \item For each edge $e=uv\in E,$ we create two vertices $e_{uv}$, $l_{uv}$ and introduce three edges $e_{uv}l_{uv}, ue_{uv}, e_{uv}v$.
    
    \item Finally, we add a new vertex $p$ and make it adjacent to each vertex $v_i \in V$.
\end{itemize}
This completes the construction of $G'$. For an illustration of $G'$, we refer to Figure \ref{max-min-dd-2-add}. It is easy to observe that $G'$ can be constructed in polynomial time as $|V'|=O(|V|^3)$ and $|E'|=O(|V|^3)$. Observe that the constructed graph $G'$ is a non-$DD_2$-bipartite graph as each weak support vertex $e_{uv}$ does not satisfy the $2^{nd}$ property in Proposition \ref{charac-DD-2-weaksupport-certified} (i.e., $N_{G'}(e_{uv}) \setminus (L_{G'} \cup S_{G'}) \neq \emptyset$).

\begin{figure}[htbp]
    \centering
    \includegraphics[width=13 cm, height=5 cm]{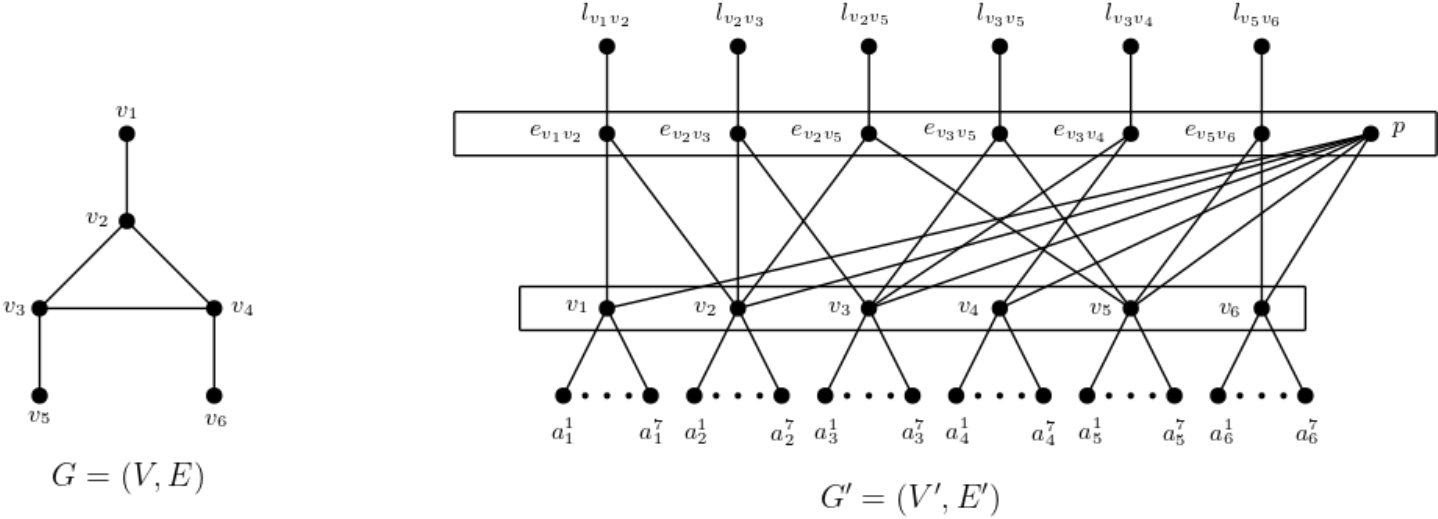}    \caption{An illustration of the above reduction through an example}
    \label{max-min-dd-2-add}
\end{figure}

\begin{claim}
\label{Claim-add}
The graph $G$ has a minimal vertex cover of cardinality at least $k$ if and only if there exists a solution to the \textsc{Max-Min-to-$DD_2$} problem for $G'$ of size at least $k(m+1)$.
\end{claim}
\begin{proof}
Let $S$ be a minimal vertex cover of $G$ with cardinality at least $k.$ It can be easily verified that the edge set $\hat{E}= \bigcup\limits_{v_i\in S} \{a_i^jp \mid \forall j\in [m+1]\}$ is a solution to \textsc{Max-Min-to-$DD_2$} of size at least $k(m+1)$ for the graph $G'$. This is because in graph $(V', E' \cup \hat{E})$, every weak support vertex has at least one neighbor, which is neither a pendant vertex nor a support vertex.

Conversely, suppose $\hat{E}$ be a minimal set of edges added to the graph $G'$ such that $H=(V, E' \cup \hat{E})$ is a $DD_2$-graph of cardinality  at least $ k(m+1)$. From Proposition \ref{charac-DD-2-weaksupport-certified}, there exists some weak support vertices (say $w$) in $G'$ such that $N_{G'}(w)\setminus (L_{G'}\cup S_{G'}) = \emptyset$. Thus, in graph $H$, to make $N_H(w)\setminus (L_H\cup S_H) \neq \emptyset$ for such $w\in S_{G'}$, one of the following must hold: (a) the vertex $w$ is adjacent to some vertex (say $x\notin S_{G'}$) (this can be done by adding an edge $wx$); (b) the vertex $w$ is no more a weak support vertex (this can be done by adding an edge to the unique pendent vertex adjacent to $w$); or (c) at least one of the strong support vertex (say $u$) adjacent to $w$ is no more a support vertex (this can be done by adding edges to all the pendent vertices adjacent to $u$).

Now let $w\in S'_{G'}$ be a vertex such that $N_{G'}(w)\setminus (L_{G'}\cup S_{G'}) = \emptyset$ and by using (a) $N_H(w)\setminus (L_H\cup S_H) \neq \emptyset$ in the graph $H.$ Then let $E^* = (\hat{E}\setminus \{wx\})\cup \{a_t^{i}p, \forall i\in [m+1]\}$, where $x(\notin S_{G'})$ is a vertex of $G'$ and $\{a_t^{i}, \forall i\in [m+1]\}$ be the set of pendents adjacent to $t$ ($t$ is a strong support vertex adjacent to $w$), is also a minimal set of edges of larger cardinality than $\hat{E}$ such that $H=(V, E' \cup E^*)$ is a $DD_2$-graph, which is a contradiction to our assumption. If using (b) $N_H(w)\setminus (L_H\cup S_H) \neq \emptyset$ in the graph $H$ then $E^* = (\hat{E}\setminus \{l_wx\})\cup \{a_{t_w}^{i}p, \forall i\in [m+1]\}$ if $x \notin L'_{G'}$ and $E^* = (\hat{E}\setminus \{l_wx\})\cup \{a_{t_w}^{i}p, a_{t_y}^{i}p, \forall i\in [m+1]\}$ if $x \in L'_{G'},$ where $\{a_{t_w}^{i},a_{t_y}^{i}p, \forall i\in [m+1]\}$ be the set of pendents adjacent to $t_w$ and $t_y$ respectively, ($t_w$ and $t_y$ are some strong support vertex adjacent to $w$ and $y$ ($y \in S'_{G'}$ is the vertex adjacent to $x\in L'_{G'}$) respectively), is also a minimal set of edges of larger cardinality than $\hat{E}$ such that $H=(V, E' \cup E^*)$ is a $DD_2$-graph, which is also a contradiction.

Hence assume $N_H(w)\setminus (L_H\cup S_H) \neq \emptyset$ holds for all $w\in S'_{G'}$ by using (c). Now let the required vertex set is $S=\{v_i\in V \mid a_i^jp \in \hat{E} ~\mbox{for all}~ j\in [m+1])\}.$ Now claim that $S$ is a vertex cover of $G.$ If not, then there exists an edge $e(=uv)\in E$ such that neither $u\in S$ nor $v\in S,$ which implies $N_H(e_{uv})\setminus (L_H\cup S_H) \neq \emptyset$ for $e_{uv}\in S'_{G'}$, by using either (a) or (b), then the resulting set of added edges $E^*$ is of larger cardinality than $|\hat{E}|=k(m+1),$ a contradiction. Therefore, $S$ is a vertex cover of $G$ and $|S|\geq k$ as $|\hat{E}|\geq k(m+1).$ 
\end{proof}

From the above observations, it follows that $|\hat{E}^*| = (m+1)|S^*|$, where $S^*$ is a maximum size minimal vertex cover in $G$ and $\hat{E}^*$ is a maximum size minimal edge addition set to the graph $G'$. 

Let us assume that there exists some fixed $\varepsilon' >0$ such that \textsc{Max-Min-to-$DD_2$} problem for graphs with $n'$ vertices can be approximated within a ratio of $\alpha=n'^{\frac{1}{6}-\varepsilon'}$ by using an algorithm  %\textsc{Approx-Max-Min-to-$DD_2$} 
${A}_{DD_2}$ that runs in polynomial time.

Then $|\hat{E}'|\leq \alpha |\hat{E}^*|,$ where $\alpha=n'^{\frac{1}{6}-\varepsilon'}$ and note that $|\hat{E}^*|=|S^*|(m+1)$.

Now, $|S|\leq \frac{|\hat{E}|}{m+1} \leq \alpha \frac{|\hat{E}^*|}{(m+1)}= \alpha |S^*|,$ where $\alpha=n'^{\frac{1}{6}-\varepsilon'}\leq (c n^3)^{(\frac{1}{6}-\varepsilon')}\leq c^{(\frac{1}{6}-\varepsilon')} n^{(\frac{1}{2}-3\varepsilon')},$ for some constant $c>0$.

Choose $\varepsilon>0,$ such that $c^{\frac{1}{6}-\varepsilon'}<n^{3\varepsilon'-\varepsilon},$ then $|S|\leq \alpha |S^*|< n^{3\varepsilon'-\varepsilon}n^{\frac{1}{2}-3\varepsilon'}|S^*|=n^{\frac{1}{2}-\varepsilon}|S^*|.$ 

Hence, $|S|\leq n^{\frac{1}{2}-\varepsilon}|S^*|,$ which leads to a contradiction to the Theorem \ref{max_min_VC}. Therefore, the \textsc{Max-Min-to-$DD_2$ problem} for a bipartite graph $G=(V,E)$ can not be approximated within $n^{\frac{1}{6}-\varepsilon}$ for any $\varepsilon >0,$ unless \textsc{P=NP}.     
\end{proof}

\section{Conclusion}
It would be interesting to design approximation algorithms for \textsc{Min-DD$_2$} and \textsc{Max-DD$_2$} with approximation factor smaller than 3. We suspect that these two problems are $APX$-complete for 3-regular graphs.

\end{document}